\documentclass[11pt]{article}
\usepackage{fullpage}
\usepackage{authblk}
\usepackage{amsmath}
\usepackage{amsthm}
\usepackage{amssymb}
\usepackage{amsfonts}
\usepackage{mathtools}
\usepackage{graphicx}
\usepackage{xcolor}
\usepackage{algorithm}
\usepackage[noend]{algpseudocode}
\theoremstyle{plain}
\newtheorem{theorem}{Theorem}

\newtheorem{claim}{Claim}

\newtheorem{proposition}{Proposition}
\newtheorem{corollary}{Corollary}

\theoremstyle{definition}
\newtheorem{definition}{Definition}

\theoremstyle{remark}
\newtheorem{remark}{Remark}
\usepackage{array}
\usepackage{epstopdf}
\usepackage{hyperref}

\usepackage{color,colortbl}
\definecolor{darkblue}{rgb}{0.0,0.0,0.65}
\definecolor{darkred}{rgb}{0.65,0.0,0.0}
\hypersetup{
  colorlinks = true,
  citecolor  = darkblue,
  linkcolor  = darkred,
  filecolor  = darkblue,
  urlcolor   = darkblue,
}

\newcommand{\Pc}{\mathsf{Par}}
\newcommand{\Qc}{\mathcal{Q}}

\newcommand{\num}{{\sf Num}} 
\newcommand{\id}{\textsf{Id}} 
\newcommand{\sat}[1]{P_{\Phi}\left(#1\right)}
\newcommand{\hsat}{\widehat{P_{\Phi}}}  
\newcommand{\mat}{M}
\newcommand{\matnor}{\boldsymbol{R}^{\Phi,d,\mathsf{rescale}}}

\newcommand{\perm}[1]{\mathbb{S}_{#1}}
\newcommand{\pii}[1]{\pi_{#1}}
\newcommand{\sii}[1]{\sigma_{#1}}

\newcommand{\tup}{U}

\newcommand{\numnew}{d_{\text{new}}}
\newcommand{\mul}{\mathsf{hist}}
\newcommand{\mult}[1]{\mathsf{hist}(#1)}
\newcommand{\vv}[1]{\mult{#1}!}

\newcommand{\Ic}{\mathcal{I}}

\DeclareMathOperator*{\tr}{Tr}   
\DeclareMathOperator*{\ex}{\mathbb{E}}
\newcommand{\norm}[1]{\left\lVert#1\right\rVert}  
\newcommand{\re}{\mathbb{R}}

\newcommand{\T}{\boldsymbol{T}^{\Phi}}  
\newcommand{\M}{\boldsymbol{M}^{\Phi}}  
\newcommand{\Sym}{\boldsymbol{R}^{\Phi,d}}
\newcommand{\Tri}{\boldsymbol{R}^{\Phi,d,\mathrm{trim}}}
\newcommand{\SPhi}{S^{\Phi}}
\newcommand{\SPhid}{S^{\Phi,d}}
\newcommand{\ff}{f^{\Phi}}

\newcommand{\OO}[1]{O\left(#1\right)} 
\newcommand{\U}[2]{U^{(#1)}_{#2}}
\newcommand{\II}[1]{I^{(#1)}}

\usepackage{thmtools} 

\declaretheorem[shaded={rulecolor=black, rulewidth=0.5pt, bgcolor=white},name=Algorithm]{algbox}

\begin{document}
	\title{A Simpler Strong Refutation of Random  $k$-XOR\thanks{This work is presented at International Conference on Randomization and Computation (RANDOM) 2020. }}
	
\author{Kwangjun Ahn}
\affil{\small{Department of Electrical Engineering and Computer Science\\ Massachusetts Institute of Technology 
\\Email: \texttt{kjahn@mit.edu} } }
	\maketitle
\begin{abstract}
Strong refutation of random CSPs  is a fundamental question in theoretical computer science that has received particular attention due to the long-standing gap  between the information-theoretic limit and the computational limit.
This gap is recently bridged by  Raghavendra, Rao and Schramm where they study sub-exponential algorithms for the regime between the two limits. 
In this work, we take a simpler approach to their algorithm and analysis.
\end{abstract}

\section{Introduction}

Refutation of random instances of constraint satisfaction problems (random CSPs) is one of the central questions in theoretical computer science with numerous applications.
Among many  predicates (types of constraints), this paper considers the XOR predicate and studies the strong refutation of the corresponding random CSP.
In fact, Allen, O’Donnell and Witmer~\cite{allen2015refute} demonstrate that one can use  strong refutation algorithms for random XOR to refute random CSPs with other predicates\footnote{For instance, it is demonstrated that one can refute random $k$-SAT by reducing it to strong refutations of random $\ell$-XOR for $\ell=1,2,\dots, k$.}. In particular, we consider:
 \begin{definition}[Random $k$-XOR]\label{def:xor} A   random $\boldsymbol{k}$-XOR with probability $p$ (or equivalently,  at density  $pn^{k-1}$) refers to a  set $\Phi =\{C_S\}$ of $k$-XOR constraints over $n$ variables $x\in\{\pm 1\}^n$ obtained as per the following procedure: 
 \begin{enumerate}
     \item   First sample each of the $n^k$ possible $k$-tuples with probability  $p$ independently.
     \item  For each sampled $S =(s_1,s_2,\dots, s_k)\in [n]^k$, include a $k$-XOR constraint $C_S$ : $\prod_{i=1}^k x_{s_i} =\eta_S$, where $\eta_S$ is i.i.d. Rademacher random variable.
 \end{enumerate}   
For an assignment $x\in\{\pm 1\}^n$, let $\sat{x}$ be the fraction of constraints satisfied by $x$.
\end{definition} 
\begin{remark}
One can alternatively consider a model where we sample subsets of size $k$ instead of $k$-tuples (there will be $\binom{n}{k}$ possible subsets in total).
However, as noted in \cite{allen2015refute}, the precise details of the random model are not relevant to the results to follow.
For simplicity, we follow the prior works~\cite{allen2015refute,raghavendra2017strongly} and  consider the above $k$-tuple model throughout the paper.
\end{remark}
Under this random $k$-XOR model,  we study the strong refutation problem.
To motivate the problem, it is a consequence of standard concentration inequalities that when the density is of $\omega(1)$ (i.e., $pn^{k-1}=\omega(1)$), with high probability, no assignment can satisfy more than  a $1/2+o(1)$ fraction of the constraints.
Hence, it is a natural algorithmic question to  ask whether one can certify such a fact.
More specifically, we consider:
\begin{definition}[Strong refutation] \label{def:strref}
	For a quantity $\alpha=\omega(1)$,
	an algorithm which takes a $k$-XOR instance and outputs a quantity $\hsat$  is said to strongly refute random $k$-XOR at density $\alpha$ if it satisfies:
	\begin{enumerate}
	    \item For any $k$-XOR instance $\Phi$, $\sat{x} \leq \hsat$ for all assignments $x\in\{ \pm 1 \}^n$.
	    \item For a random $k$-XOR instance $\Phi$ with density $\alpha$, $\hsat = 1/2+o(1)$ with high probability.
	\end{enumerate} 
\end{definition}

However, the  question of developing strong refutation algorithms for the density $\omega(1)$ turns out to be rather intractable.
More specifically, the best known guarantees  are obtained from spectral methods~\cite{allen2015refute,barak2016noisy} which require the density to be $ \widetilde{\Omega}(n^{k/2-1})$.
This computational limit of $\widetilde{\Omega}(n^{k/2-1})$ (also known as \emph{spectral threshold}) is significantly larger than the information-theoretic threshold of $\omega(1)$, and this gap has been conjectured to be fundamental.

 Recently, to bridge the gap, Raghavendra, Rao and Schramm investigate sub-exponential refutation algorithms below the spectral threshold~\cite{raghavendra2017strongly}.
 Their results constitute a smooth trade-off between the density and the time complexity required for certifying unsatisfiability.
 More specifically, their algorithm parametrized by $d$ achieves the following performance: 
 For all $\delta\in [0,1)$, their algorithm with $d=n^{\delta}$ finds a certificate at density $\widetilde{\Omega}\left(n^{(k/2-1)(1-\delta)}\right)$ in time $\exp(\widetilde{O}(n^{\delta}))$. 
 At $\delta\approx 0$, their result recovers the polynomial-time strong refutation result at the spectral threshold, while at $\delta\approx1$, their result recovers an exponential-time strong refutation at the information-theoretic threshold.

  This beautiful result, however, relies on an intricate analysis spanning over 20 pages as well as technical complications in algorithm steps, raising a question of whether one can simplify the analysis as well as the algorithm.  
  This work addresses this question as follows:
 \begin{enumerate}
     \item  This work simplifies the key technical component of the analysis in~\cite{raghavendra2017strongly} (Section~\ref{spec}). 
     More specifically, the spectral norm analysis~\cite[Theorem 4.4]{raghavendra2017strongly} is significantly simplified in this work relying  on  more elementary combinatorial arguments.
    \item In addition, for   even $k$, this work also simplifies  their refutation algorithm by modifying the technical preprocessing step (Section~\ref{algdes}).
    At a high level, the previous work requires $\OO{d}$ spectral norm computations of the matrix of size $n^{\OO{d}\times \OO{d}}$, whereas the approach in this paper only requires a \emph{single} computation.
 \end{enumerate}
As a byproduct of our simpler approach, the theoretical guarantee in this paper comes with less technical conditions and enjoys better refutation performances.

\section{Preliminary: spectral strong refutation algorithms}
\label{sec:review}
To set the stage for our main result, we first briefly review the spectral refutation algorithms in the prior works~\cite{coja2007strong,allen2015refute,barak2016noisy} that achieve the spectral threshold.
For illustrative purpose, we focus throughout on the case when $k$ is even; indeed, the odd $k$ case follows similarly modulo some extra ``tricks'' to reduce it to the even case (see e.g. \cite[Appendix A.2]{allen2015refute} for details).

We first represent the strong refutation problem as the problem of certifying an upper bound on a polynomial.
\begin{definition}[Constraints tensor] \label{def:tensor}
Given a set of constraints  $\Phi$ consisting of $m$ constraints  $C_{S_1},\dots, C_{S_m}$, the \emph{constraints tensor} of $\Phi$ is a $n^k$ tensor $\T$  defined as $\T_{S}=\eta_{S_a}$ if $S=S_a$ for some $a=1,\dots, m$ and $\T_S =0$ otherwise.    
\end{definition}
\begin{definition}[Constraints polynomial] \label{def:poly}
Given a set of constraints  $\Phi$, the \emph{constraints polynomial} of $\Phi$ is a $k$-degree homogeneous polynomial $\ff$  defined as $\ff(x):=\langle \T, x^{\otimes k}\rangle$.    
\end{definition}
Having the above definitions, it is straightforward to verify the following identity:
\begin{align} 
      &2m \cdot \left(\sat{x}-\frac{1}{2}\right) =  \sum_{\ell=1}^m \eta_{(i_\ell,j_\ell)} x_{i_\ell}x_{j_\ell} =    \langle \T, x^{\otimes k}\rangle = \ff(x) \nonumber\\
       \label{iden}    &\Longleftrightarrow \sat{x} = \frac{1}{2} + \frac{1}{2m}\cdot  
   \ff(x)\,.
\end{align}  
Having established \eqref{iden}, the strong refutation problem turns into the problem of certifying a good upper bound on the constraints polynomial:
\begin{align}  
 \max_{x\in \{\pm 1\}^n} \sat{x}  \leq \frac{1}{2}+o(1) \Longleftrightarrow     \max_{x\in \{\pm 1\}^n} \ff(x) = o(m)\,. \label{cert:k2}
\end{align} 
Now, the key idea of the spectral refutation is to certify an upper bound on the constraints polynomial by first computing a \emph{matrix representation} of the polynomial and then computing the spectral norm of the matrix\footnote{We remark that many spectral methods in the literature can be understood as following this principle of computing the spectral norm of a matrix representation. We refer readers to \cite[Section~9]{ahn2020graph} for more examples of matrix representations arising in statistical problems.}. We first formally define matrix representations:
\begin{definition}[Matrix representation] \label{def:mat}
We say an $n^{k/2}\times n^{k/2}$ matrix $M$ is a matrix representation of a degree-$k$ homogeneous polynomial $f$ if we have $f(x) = (x^{\otimes k/2})^\top M x^{\otimes k/2}$. Here and below, we use $x^{\otimes k/2}$ to denote its vector flattening\footnote{More formally, we regard $x^{\otimes k/2}$ as a vector of dimension $nk/2$ rather than  as a $n^{k/2}$ tensor.}. 
\end{definition}
If we have a matrix representation $M$ of the constraints polynomial $\ff$, one can certify an upper bound  by computing the spectral norm of the matrix representation:
\begin{align}\label{matrep}
    \max_{x\in \{\pm 1\}^n} \ff(x) = \max_{x\in \{\pm 1\}^n}  (x^{\otimes k/2})^\top M x^{\otimes k/2} \leq n^{k/2} \norm{M}\,,
\end{align}
where the inequality follows since $\norm{x^{\otimes k/2}} = \sqrt{n^{k/2}}$.

Having \eqref{matrep}, it is now crucial to find a  matrix representation that results in a small spectral norm.
It turns out that to achieve the spectral threshold, a simple matrix representation suffices.
Let us denote by $\M$  the natural $n^{k/2}\times n^{k/2}$ flattening  of the constraints tensor $\T$. 
Certainly $\M$ is a matrix representation, and hence, its symmetrization is also a matrix representation:
\begin{definition}[Symmetric matrix representation] \label{def:symrep} $\SPhi := \frac{1}{2}[\M +(\M)^\top]$.
\end{definition} 
Indeed, it follows from a standard result in random matrix theory that the symmetric matrix representation $\SPhi$ constructed from random $k$-XOR has the spectral norm $o(m)$ with high probability  as soon as the density is above the spectral threshold, i.e.,  $\alpha= \widetilde{\Omega}(n^{k/2-1})$ (see e.g. \cite[Appendix A.1]{allen2015refute} for precise details).

Thus far, we present the spectral refutation algorithms in the prior arts that achieve the spectral threshold.
Now, we move on to the result due to  Raghavendra, Rao and Schramm~\cite{raghavendra2017strongly}.
It turns out that for strong refutation  below the spectral threshold, one needs to rely on a higher-order symmetry. This will be the subject of the next section.

\section{Higher-order symmetry for refutation below spectral threshold}\label{sec:high}

In this section, we discuss the approach based on a higher-order symmetry due to Raghavendra, Rao and Schramm~\cite{raghavendra2017strongly}.
We remark that a similar technique was independently developed by Bhattiprolu,  Guruswami and Lee~\cite{bhattiprolu2017sum} under the context of finding an upper bound certificate of the  tensor injective norm.
\subsection{Higher-order type-symmetric matrix representation}
\label{subsec:sym}

To illustrate the main idea, we first define the types of the entries:
\begin{definition}[Histogram tuples]
  Let $\mult{I}$ be the $n$-tuple $(\alpha_1,\dots, \alpha_n)$ such that $\alpha_i$ is the number of times $i$ appears in $I$, i.e., the histogram of the tuple $I$. Let $\mult{I}! := \prod_{i=1}^n (\alpha_i)!$, where $0!=1$ by convention. 
\end{definition}
\begin{definition}[Types of entries]
Given a matrix representation $M$ of a constraints polynomial $\ff$,  we say two entries $M_{I,J}$ and $M_{I',J'}$ have the same \emph{type} if  $\mult{I}=\mult{I'} $ and $\mult{J}=\mult{J'}$, i.e., for all $i\in[n]$, the number of  $i$'s appearing in  $I$ (resp. $J$) is equal to that in $I'$ (resp. $J'$) .
\end{definition}
With this definition, one can easily notice that the entries of the same type corresponds to the coefficient of the same monomial in $\ff$. 
Now the key idea of \cite{raghavendra2017strongly} is to consider a matrix representation  which distributes  the coefficient of a monomial in $\ff$  \emph{equally} across the corresponding type of entries.
It turns out that such a matrix representation has small spectral norm, resulting in a better refutation certificate.
 
\begin{definition}[Type-symmetric matrix representation]
We say a matrix representation  is   type-symmetric if the entries of the same type have the same value. 
\end{definition}
To maximize the gain from a type-symmetric matrix representation, \cite{raghavendra2017strongly}  indeed considers a higher order matrix representation, which amounts to working with  $(\ff)^d$ instead of $\ff$ for some $d>1$ at the cost of increased computational complexity. 
Given  a type-symmetric matrix representation $\Sym$  of $(\ff)^d$  (we defer the formal definition to Definition~\ref{def:matsym}), we have $\ff(x)^d = (x^{\otimes kd/2})^\top \Sym x^{\otimes kd/2}$  from which one can conclude
\begin{align}\label{highorder}
    \max_{x\in \{\pm 1\}^n} \ff(x) = n^{k/2}\cdot  \max_{x\in \frac{1}{\sqrt{n}}\cdot \{\pm 1\}^n}\left[(x^{\otimes kd/2})^\top \Sym x^{\otimes kd/2} \right]^{1/d}     \leq  n^{k/2} \cdot \norm{ \Sym}^{1/d}\,.
\end{align}
  However, as mentioned in \cite[Section 4]{raghavendra2017strongly}, it turns out that the inequality in \eqref{highorder} is not tight enough for the desired result.
  To overcome this issue,   \cite{raghavendra2017strongly}  suggested the technique of removing high multiplicity rows/columns. This will be the subject of the next subsection.

\subsection{Overcoming challenge with trimming rows/columns} \label{subsec:trim}
 Before getting into the technique in \cite{raghavendra2017strongly}, let us first discuss why the inequality in \eqref{highorder} is not tight. 
 The main reason for the looseness is the fact that the left hand side of the inequality is the maximum over the specific unit vectors of the form $\frac{1}{\sqrt{n}}\cdot \{\pm 1\}^n$, while the spectral norm certificate finds the maximum over all unit vectors.
 In particular, if the maximum  of the spectral norm is achieved by a sparse vector, this certificate would no longer provide a good upper bound.

To cope with this issue, \cite{raghavendra2017strongly}  employs the \emph{trimming} step, in which they remove rows and columns of $\Sym$ corresponding to index tuples with high multiplicities, i.e., $I$'s such that coordinate values of $\mult{I}$ are large.
This technical step indeed results in a better spectral norm bound as we shall see in Section~\ref{sec:trim}.

\subsection{Technical challenge of the approach in Raghavendra-Rao-Schramm}
\label{sec:chall}

However, it turns out that analyzing this higher-order method with the trimming step is rather technical:
\begin{enumerate}
     \item  Note that the construction of symmetric matrix representation results in a rather complicated dependency structure across entries, making it hard to analyze its spectral norm.
     Indeed, the spectral norm analysis~\cite[Theorem 4.4]{raghavendra2017strongly} constitutes the main technical component of the analysis in \cite{raghavendra2017strongly}.
     
     \item Moreover, it turns out that justifying the validity of the trimming step also requires some technical modification of the algorithm together with an additional careful analysis.
     At a high level, these complications arise due to the fact that the trimmed matrix is no longer a matrix representation of the constraints polynomial. 
     In particular, their approach requires  computations of $\OO{d}$ spectral norms of matrices of size $n^{\OO{d}\times \OO{d}}$.
 \end{enumerate}
We will address the above challenges in order in the subsequent sections.

\section{A simpler spectral norm analysis} 

   \label{spec}
In this section, we provide a simpler spectral norm analysis of the symmetric matrix representation.    
As we mentioned in the previous section, the symmetric matrix representation has an intricate dependency structure between entries and hence the standard tools such as matrix Chernoff bound~\cite{tropp2012user} does not apply.
Hence, we need to rely on more direct analysis based on the trace power method:
	\begin{proposition}[Trace power method]
	\label{tracepower}
		Let $n,\ell\in \mathbb{N}$, let $c\in \re$, and let $M$ be a symmetric $n\times n$ random matrix. Then,
		\begin{align*}
		\ex \tr(M^{2\ell}) \leq \beta ~\Longrightarrow~ \Pr\left(\norm{M} \leq c\cdot \beta^{1/2\ell}\right)\geq 1-c^{-2\ell}.
		\end{align*}
	\end{proposition}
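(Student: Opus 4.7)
The plan is to combine a deterministic eigenvalue inequality with Markov's inequality. The proposition is essentially a routine consequence of the fact that, for symmetric matrices, high even powers of the trace dominate high even powers of the operator norm.

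First, I would diagonalize: since $M$ is symmetric (even though random), it admits a spectral decomposition with real eigenvalues $\lambda_1, \ldots, \lambda_n$, and $\norm{M} = \max_i |\lambda_i|$. Because $2\ell$ is even, every $\lambda_i^{2\ell}$ is nonnegative, so
\[
\norm{M}^{2\ell} = \max_i \lambda_i^{2\ell} \leq \sum_{i=1}^n \lambda_i^{2\ell} = \tr(M^{2\ell}).
\]
Taking $(2\ell)$-th roots yields the deterministic bound $\norm{M} \leq \tr(M^{2\ell})^{1/(2\ell)}$, which converts a spectral-norm tail bound into a trace tail bound.

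Second, I would apply Markov's inequality to the nonnegative random variable $\tr(M^{2\ell})$: under the hypothesis $\ex \tr(M^{2\ell}) \leq \beta$,
\[
\Pr\bigl(\tr(M^{2\ell}) \geq c^{2\ell}\beta\bigr) \leq \frac{\ex \tr(M^{2\ell})}{c^{2\ell}\beta} \leq c^{-2\ell}.
\]
Combining the two displays, the event $\{\norm{M} > c\beta^{1/(2\ell)}\}$ is contained in $\{\tr(M^{2\ell}) > c^{2\ell}\beta\}$, so it has probability at most $c^{-2\ell}$, which gives the claim by taking the complementary event.

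There is essentially no obstacle here; the proposition is a textbook statement and the only thing worth noting is the parity of the exponent, which is what makes the trace-norm comparison valid without any absolute values. The real work in the paper comes later, in controlling $\ex \tr(M^{2\ell})$ for the particular symmetric matrix representation arising from random $k$-XOR — Proposition~\ref{tracepower} itself is just the off-the-shelf tool that will convert those combinatorial moment estimates into a high-probability spectral norm bound.
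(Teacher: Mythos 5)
Your proof is correct and is exactly the paper's argument: the deterministic inequality $\norm{M}^{2\ell} \leq \tr(M^{2\ell})$ (for symmetric $M$ and even power) followed by Markov's inequality applied to $\tr(M^{2\ell})$. You have simply spelled out the eigenvalue reasoning that the paper leaves implicit.
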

\begin{proof}
	The proof follows from the fact that $\norm{M}^{2\ell} = \norm{M^{2\ell}} \leq \tr(M^{2\ell})$ together with Markov's inequality.
\end{proof} 
\noindent Hence, to come up with a probabilistic upper bound on the spectral norm, one need to bound the trace power term.
However, in contrast to well-known settings in random matrix theory, our matrix of interest $\mat$ has $kd/2$-tuples for its row/column indices, which renders computing the trace power term more complicated.
In particular, for an integer $\ell$, the trace power term can be represented as
\begin{align}\label{trace}
\sum_{\substack{\II{1},\ldots, \II{2\ell} \in [n]^{kd/2}}} \ex\left[ \prod_{j=1}^{2\ell} \mat_{\II{j}, \II{j+1}}\right]\,, 
\end{align}
	where indices are read modulo-$2\ell$, i.e.,  $I_{2\ell+1}$ denotes $I_1$.
As a warm-up, we first analyze the symmetric matrix representation, i.e. $M=\Sym$. 

\subsection{Warm-up: analysis for higher-order type-symmetric matrix}
\label{subsec:warm-up}
In this section we apply the trace power method to $M=\Sym$ as a warm-up. Let us first formally define $\Sym$. To that end, we first recall the symmetric matrix representation $\SPhi$.
By its definition (Definition~\ref{def:symrep}), $\SPhi$ is a  $n^{k/2}\times n^{k/2}$ symmetric random matrix with independent mean-zero entries taking values in $[-1,1]$.
Now, let $\SPhid$ be the $d$-th Kronecker power   of $\SPhi$, i.e., for $k/2$-tuples $U_1,\dots, U_d $ and $V_1,\dots, V_d$, 
\begin{align}\label{def:sym}
   \SPhid_{(U_1,\dots, U_d),(V_1,\dots, V_d)} =   \SPhi_{U_1,V_1}\times \SPhi_{U_2,V_2} \times \dots  \times \SPhi_{U_d,V_d}\,.
\end{align}
Now, the symmetric matrix representation is obtained from $\SPhid$ by replacing each entry with the average of the entries of the same type as the corresponding entry.
To formally define, we begin with some notations:
\begin{definition}[Permutations]
For each positive integers $n,q$ and $I= (i_1,\ldots, i_{q})\in [n]^{q}$,  let $\perm{q}$ be the set of permutations on $[q]$. 
	For a permutation $\pi\in \perm{q}$ and a subtuple $U=(i_{j_1},\dots, i_{j_\ell})$ of $I$, let $\pi(U):= (i_{\pi(j_1)},\ldots, i_{\pi(j_\ell)})$. 
\end{definition}
Now based on these notations, we formally define $\Sym$ as follows:
\begin{definition}[Higher-order symmetric matrix representation]\label{def:matsym}
For an even integer $k$ and $d\geq 1$, $\Sym$ is an $n^{kd/2\times kd/2}$ matrix representation of $(\ff)^d$ defined as 
\begin{align}
    \Sym_{I,J} = \frac{1}{|\perm{kd/2}|^{2}} \sum_{\pi,\sigma \in \perm{kd/2} } \SPhid_{\pi(I),\sigma(J)}\,. 
\end{align}
\end{definition}
Now having the formal definition of $\Sym$, one can write the trace power term \eqref{trace} as follows (where we write each $kd/2$-tuple as $\II{\cdot }=(\U{\cdot}{1},\U{\cdot}{2},\dots, \U{\cdot}{d})$):
\begin{align}
 &\frac{1}{|\perm{kd/2}|^{4\ell}}\cdot   \sum_{\substack{\II{j} \in [n]^{kd/2}\\ j=1,\dots,2\ell }} \sum_{\substack{\pii{j},\sii{j}\in \perm{kd/2} \\ j=1,\dots, 2\ell }} \ex\left[ \prod_{j=1}^{2\ell} \SPhid_{\pii{j}(\II{j}), \sii{j}(\II{j+1})}\right] \nonumber\\
 =&\frac{1}{|\perm{kd/2}|^{4\ell}}\cdot   \sum_{\substack{\II{j} \in [n]^{kd/2}\\ j=1,\dots,2\ell }} \sum_{\substack{\pii{j},\sii{j}\in \perm{kd/2} \\ j=1,\dots, 2\ell }}  \ex\left[ \prod_{j=1}^{2\ell} \prod_{s=1}^d \SPhi_{\pii{j}(\U{j}{s}), \sii{j}(\U{j+1}{s})}\right]\,, \label{trace:2}
\end{align}
Although \eqref{trace:2} looks quite complicated, note that one can actually simplify it further.
\begin{definition}[Partition of the index set according equality] \label{def:part}
Given $\{\II{j}\} $, $\{\pii{j}\} $ and  $\{ \sii{j}\}$ ($j=1,\dots, 2\ell$), we define $\Pc(\{\II{j}\},\{\pii{j}\} , \{\sii{j}\}  )$ to be the partition of the index set $\Ic:=\{(j,s)~:~j=1,\dots,2\ell,s=1,\dots, d\}$ according to the equivalence relation
\begin{align*}
    \text{$(j,s)\sim (j',s')$ $\Leftrightarrow$ }\begin{cases} \text{$(\pii{j}(\U{j}{s}), \sii{j}(\U{j+1}{s})) = (\pii{j'}(\U{j'}{s'}), \sii{j'}(\U{j'+1}{s'}))$ or}\\
    \text{$(\pii{j}(\U{j}{s}), \sii{j}(\U{j+1}{s})) = (\sii{j'}(\U{j'+1}{s'}), \pii{j'}(\U{j'}{s'}))$.}
    \end{cases}
\end{align*}
We denote by $|\Pc(\{\II{j}\},\{\pii{j}\} , \{\sii{j}\}  )|$ the number of equivalence classes in the partition.
\end{definition}
Since $\SPhi$ is a symmetric random matrix with mean zero entries, it follows that the summand in \eqref{trace:2} corresponding to $\{\II{j}\} $, $\{\pii{j}\} $ and  $\{ \sii{j}\}$ is equal to zero if  the partition $\Pc(\{\II{j}\},\{\pii{j}\} , \{\sii{j}\}  )$ contains an equivalence class of odd size.

Hence, in order to have a nonzero summand, every equivalence class of $\Pc(\{\II{j}\},\{\pii{j}\} , \{\sii{j}\}  )$  must have even size.
\begin{definition}
Given $\{\II{j}\} $, $\{\pii{j}\} $ and  $\{ \sii{j}\}$ ($j=1,\dots, 2\ell$), we say the partition of the index set  $\Pc(\{\II{j}\},\{\pii{j}\} , \{\sii{j}\}  )$  is called an \emph{even}  partition if all  equivalence classes have even size.
\end{definition}
When $\Pc(\{\II{j}\},\{\pii{j}\} , \{\sii{j}\}  )$ is even, since each entry of $\SPhi$ is in $[-1,1]$, one can easily upper bound the summand of \eqref{trace:2} explicitly in terms of the number of equivalence classes:
\begin{align}\label{summand}
\ex\left[ \prod_{j=1}^{2\ell} \prod_{s=1}^d \SPhi_{\pii{j}(\U{j}{s}), \sii{j}(\U{j+1}{s})}\right]  \leq  p^{|\Pc(\{\II{j}\},\{\pii{j}\} , \{\sii{j}\}  )|}\,.
\end{align}
Using the upper bound~\eqref{summand}, and grouping the trace power term so that each group contains the summand corresponding to the same partition, we obtain
\begin{align}\label{trace:3}
 \eqref{trace:2}\leq   \frac{1}{\left|\perm{kd/2}\right|^{4\ell}} \sum_{\Qc\text{: even} }\left[ p^{\left|\Qc \right|} \cdot\num(\Qc)  \right]\,,
\end{align}
where $\num(\Qc):= \left|\left\{ (\{\II{j}\},\{\pii{j}\} , \{\sii{j}\}) ~:~\Pc(\{\II{j}\},\{\pii{j}\} , \{\sii{j}\}  )=\Qc \right\} \right|$.
Therefore, to upper bound the trace power term, one needs to upper-estimate  $\num(\Qc)$ for each $\Qc$.
Although the counting $\num(\Qc)$ looks complicated,   the symmetry  saves the day.
\begin{definition}$\num(\Qc~|~ \{\pii{j}\}):= \left|\left\{ (\{\II{j}\}, \{\sii{j}\}) ~:~\Pc(\{\II{j}\},\{\pii{j}\} , \{\sii{j}\}  )=\Qc \right\} \right|$.
\end{definition}
First, one can easily verify the following based on a simple symmetry argument (here  $\id$  denotes the identity permutation in $\perm{kd/2}$):
\begin{claim}\label{cl1}
	$\num\left(\Qc~|~\{\id \}\right)=\num\left(\Qc~|~\{\pii{j} \}\right) $ for any $\{\pii{j}\}$. 
\end{claim}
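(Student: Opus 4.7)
The plan is to exhibit an explicit bijection between the two sets whose cardinalities the claim equates. Fix $\Qc$ and arbitrary $\{\pii{j}\}_{j=1}^{2\ell}$. Given any $(\{\II{j}\},\{\sii{j}\})$ realizing $\Pc(\{\II{j}\},\{\pii{j}\},\{\sii{j}\})=\Qc$, I will construct a pair $(\{\widetilde{\II{j}}\},\{\widetilde{\sii{j}}\})$ with $\Pc(\{\widetilde{\II{j}}\},\{\id\},\{\widetilde{\sii{j}}\})=\Qc$, and show that the map is a bijection.

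The idea is to absorb each $\pii{j}$ into the neighboring data. Define $\widetilde{\II{j}}$ to be the reindexing of $\II{j}$ given by $\widetilde{\II{j}}_i := \II{j}_{\pii{j}(i)}$ for $i\in[kd/2]$, and set $\widetilde{\sii{j}} := \pii{j+1}^{-1}\circ\sii{j}$. A direct unwinding of the subtuple-permutation action then yields, for each $(j,s)$, the identity
\begin{align*}
\bigl(\pii{j}(\U{j}{s}),\,\sii{j}(\U{j+1}{s})\bigr) \;=\; \bigl(\widetilde{\U{j}{s}},\,\widetilde{\sii{j}}(\widetilde{\U{j+1}{s}})\bigr)\,.
\end{align*}
Since the equivalence relation in Definition~\ref{def:part} depends only on such pairs (and is symmetric in the two orderings of the pair), the partition induced by the transformed data under $\{\id\}$ is exactly $\Qc$. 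The map is clearly invertible: given $(\{\widetilde{\II{j}}\},\{\widetilde{\sii{j}}\})$, one recovers $\II{j}_i = \widetilde{\II{j}}_{\pii{j}^{-1}(i)}$ and $\sii{j} = \pii{j+1}\circ\widetilde{\sii{j}}$. This gives the claimed equality of cardinalities.

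The only mildly delicate step is verifying the second coordinate of the displayed identity, namely that applying $\sii{j}$ as a subtuple permutation to $\U{j+1}{s}$ produces the same $k/2$-tuple of entries as applying $\pii{j+1}^{-1}\circ\sii{j}$ to the reindexed subtuple $\widetilde{\U{j+1}{s}}$. This reduces to the elementary relation $\II{j+1}_{m} = \widetilde{\II{j+1}}_{\pii{j+1}^{-1}(m)}$ combined with the definition $\pi(U)=(i_{\pi(j_1)},\ldots,i_{\pi(j_\ell)})$; no probabilistic input or random-matrix estimate enters. Thus the claim is a pure symmetry of the counting under relabelling of positions, and is essentially a one-line verification once the bijection is written down.
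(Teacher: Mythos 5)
Your proposal is correct and takes essentially the same approach as the paper: both exhibit an explicit bijection that absorbs the $\pii{j}$'s into the index tuples and the $\sii{j}$'s, using the observation that the partition from Definition~\ref{def:part} is invariant under this simultaneous relabelling of positions. One minor but noteworthy point: your composition formula $\widetilde{\sii{j}} = \pii{j+1}^{-1}\circ\sii{j}$ is the correct one (as a direct check of the second coordinate confirms), whereas the paper's written formula $\sii{j}\circ\pii{j}^{-1}$ has the composition on the wrong side and the wrong index shift; this appears to be a typo in the paper, and your more careful unwinding of the subtuple-permutation action fixes it.
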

\begin{proof}  See Section~\ref{pf:lem1}. 
\end{proof}
Due to Claim~\ref{cl1}, it follows that:
\begin{align}
    \num\left(\Qc\right) = \left|\mathbb{S}_{kd/2}\right|^{2\ell}\cdot \num\left(\Qc~|~\{\id \}\right)\,.
\end{align}
Hence, with this argument, we reduce the problem of counting triples $(\{\II{j}\}, \{\pii{j}\}, \{\sii{j}\} )$ into the problem of counting pairs $(\{\II{j}\} , \{\sii{j}\} )$.
Now let us further reduce the problem.
To that end, we first define:
\begin{definition}
 We say a collection of index tuples $\{\II{j}\}$ is $\Qc$-\emph{valid} if there exist $\{\sii{j}\}$ such that $\Pc(\{\II{j}\}, \{\id\}, \{\sii{j}\} )$ is equal to $\Qc$.  
\end{definition}
\begin{claim}\label{cl2}
For any $\Qc$-valid $\{\II{j}\}$, there are at most  $(kd/2)^{k|\Qc|/2} \cdot \prod_{j=1}^{2\ell}\vv{\II{j}}$ different 
$\{\sii{j}\}$'s such that  $\Pc\left(\{\II{j}\}, \{\id\}, \{\sii{j}\}\right)= \Qc$.
\end{claim}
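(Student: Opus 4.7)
The plan is to fix the $\Qc$-valid collection $\{\II{j}\}$ and count admissible $\{\sii{j}\}$ by decomposing the choice into two stages: first pick the collection of ``target tuples'' $\sii{j}(\U{j+1}{s})\in [n]^{k/2}$ for all $j,s$ compatibly with $\Qc$, then count the permutations $\sii{j}\in\perm{kd/2}$ realizing these targets. This separation is natural because, with $\pii{j}=\id$ in this counting, $\Qc$ constrains only the edge values $(\U{j}{s}, \sii{j}(\U{j+1}{s}))$ (i.e., the target tuples), while the realization count depends only on $\II{j+1}$ and the prescribed target.

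For the realization stage, I would note that for each $j$ the target assignment $(\sii{j}(\U{j+1}{1}),\dots,\sii{j}(\U{j+1}{d}))\in[n]^{kd/2}$ is achievable by some $\sii{j}$ if and only if it equals $\II{j+1}$ as a multiset, and when achievable the number of realizing permutations equals the stabilizer size $\vv{\II{j+1}}$, since one can freely permute the positions of $\II{j+1}$ carrying the same value. Taking the product over $j=1,\dots,2\ell$ and using cyclic indexing contributes the factor $\prod_{j=1}^{2\ell}\vv{\II{j}}$. Infeasible targets contribute $0$, so this is indeed an upper bound.

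For the target-selection stage, I would pick one representative $(j_C,s_C)$ per equivalence class $C\in\Qc$. By Definition~\ref{def:part} and $\Qc$-validity of $\{\II{j}\}$, once $\sii{j_C}(\U{j_C+1}{s_C})$ is chosen, the target tuple at every other $(j,s)\in C$ is forced---either directly, or in the swap case by matching against the first coordinate $\U{j}{s}$, which is fixed by $\{\II{j}\}$. Each representative target lies in $[n]^{k/2}$, and because the realization stage will force its entries to be values appearing in $\II{j_C+1}$, each of the $k/2$ entries admits at most $kd/2$ choices. This gives $(kd/2)^{k/2}$ choices per class and hence $(kd/2)^{k|\Qc|/2}$ across all classes; multiplying by the realization factor yields the claim.

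The main obstacle I anticipate is the bookkeeping in the target-selection stage: verifying that the ``swap'' case in Definition~\ref{def:part} is handled consistently (each $(j,s)$ lies in a unique class, so its target is prescribed exactly once, which rules out any over-determination), and confirming that dropping the global multiset-consistency constraint across blocks within a single $j$ only overcounts. Both are mild, because any inconsistent target configuration is simply killed by the realization stage (contributing $0$), so the product bound remains valid.
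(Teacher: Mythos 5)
Your proof is correct and takes a genuinely different route from the paper's. The paper proceeds sequentially through the index set $\Ic$ in lexicographical order and, for each $j$, bounds the number of admissible $\sii{j}$ by splitting according to which indices $(j,s)$ are ``new'' (lex-smallest in their class): the new positions contribute a factor at most $(kd/2)^{k\numnew^{(j)}/2}$ and the remaining positions are argued to contribute at most $\vv{\II{j+1}}$, yielding the claim via $\sum_j\numnew^{(j)}=|\Qc|$. You instead factor the count globally in two cleanly separated stages: first over the target assignment $T=(\sii{j}(\U{j+1}{s}))_{j,s}$, which is what determines $\Pc$ and which is pinned down by one free representative per class --- correctly observing that achievability forces each of the $k/2$ representative coordinates into the at most $kd/2$ values of $\II{j_C+1}$, giving $(kd/2)^{k|\Qc|/2}$ choices; and then over the permutations realizing a fixed achievable $T$, of which there are exactly $\prod_j\vv{\II{j+1}}=\prod_j\vv{\II{j}}$ (a coset of the stabilizer, hence the same size), with infeasible $T$'s contributing $0$. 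The benefit of your decomposition is that the realization count is an exact identity, uniform over achievable $T$'s, so the inequality is isolated entirely in the target-selection stage; the paper's sequential bound blends these two steps within each $j$ and its ``at most $\vv{\II{j+1}}$ choices for the remaining values'' claim is stated more loosely. You also correctly note the subtlety in the swap case: the mate's target is pinned to $\U{j}{s}$, which is already determined by $\{\II{j}\}$, so nothing is over-determined. Both arguments produce the same factorization of the bound, but yours is more modular and somewhat easier to audit.
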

\begin{proof} 
The proof is based on an elementary counting argument. See Section~\ref{pf:lem2}.
\end{proof}
Due to Claim~\ref{cl2}, now we have:
\begin{align}
    \num\left(\Qc\right) \leq  \left|\mathbb{S}_{kd/2}\right|^{2\ell}\cdot  (kd/2)^{k|\Qc|/2} \cdot \sum_{\{\II{j}\}~:~\Qc\text{-valid}} \left[\prod_{j=1}^{2\ell}\vv{\II{j}}\right] \,.
\end{align}
Putting this back to \eqref{trace:3}, we obtain the following result:
\begin{theorem} \label{thm:basic}For even $k$ and $d\geq 1$, let $\Sym$ be the $n^{kd/2\times kd/2}$ higher-order symmetric matrix representation (Definition~\ref{def:matsym}) of random $k$-XOR. Then, the following upper bound on the trace power term holds:
\begin{align*}
    \ex \tr((\Sym)^{2\ell}) \leq \frac{1}{\left|\perm{kd/2}\right|^{2\ell}} \sum_{\Qc\text{: even} }\left[  \left(p(kd/2)^{k/2}\right)^{|\Qc|} \cdot \sum_{\substack{\{\II{j}\}~:\\\Qc\text{-valid}}} \left[\prod_{j=1}^{2\ell}\vv{\II{j}}\right]   \right]\,.
\end{align*}
\end{theorem}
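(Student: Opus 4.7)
The plan is to assemble the stated bound by chaining together the identities and estimates that have already been developed in this subsection, with essentially no new ingredient. Concretely, I would start from the raw trace expansion \eqref{trace:2}, cut the sum down to even partitions using the fact that $\SPhi$ has symmetric, mean-zero entries, apply the per-summand bound \eqref{summand}, and then reduce the counting problem in two stages using Claim~\ref{cl1} (to kill the dependence on $\{\pii{j}\}$) and Claim~\ref{cl2} (to bound the number of $\{\sii{j}\}$ compatible with a given $\Qc$-valid $\{\II{j}\}$).

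In more detail, I would proceed as follows. First, unfold $\ex\tr((\Sym)^{2\ell})$ using Definition~\ref{def:matsym} into the normalized double sum \eqref{trace:2}, with a prefactor $|\perm{kd/2}|^{-4\ell}$. Second, observe that each product $\prod_{j,s}\SPhi_{\pii{j}(\U{j}{s}),\sii{j}(\U{j+1}{s})}$ has zero expectation unless every equivalence class of $\Pc(\{\II{j}\},\{\pii{j}\},\{\sii{j}\})$ has even size, and for such even partitions apply \eqref{summand} to obtain the contribution $p^{|\Qc|}$. Grouping identical partitions together yields \eqref{trace:3}, namely
\[
\ex\tr((\Sym)^{2\ell}) \;\leq\; \frac{1}{|\perm{kd/2}|^{4\ell}}\sum_{\Qc\text{: even}} p^{|\Qc|}\cdot \num(\Qc).
\]
Third, I invoke Claim~\ref{cl1} to factor $\num(\Qc) = |\perm{kd/2}|^{2\ell}\cdot \num(\Qc\mid \{\id\})$, and then Claim~\ref{cl2} to bound $\num(\Qc\mid\{\id\}) \leq (kd/2)^{k|\Qc|/2}\cdot \sum_{\{\II{j}\}\,\Qc\text{-valid}}\prod_{j=1}^{2\ell}\vv{\II{j}}$.

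Finally, I substitute these two reductions into \eqref{trace:3}. The factor $|\perm{kd/2}|^{2\ell}$ from Claim~\ref{cl1} cancels two of the four powers of $|\perm{kd/2}|$ in the denominator, yielding the prefactor $|\perm{kd/2}|^{-2\ell}$ in the theorem statement. The factor $(kd/2)^{k|\Qc|/2}$ from Claim~\ref{cl2} is absorbed with $p^{|\Qc|}$ to give $(p(kd/2)^{k/2})^{|\Qc|}$, and the remaining $\Qc$-valid sum is exactly what appears in the theorem. Collecting terms gives the bound as stated.

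Since the nontrivial structural work is entirely offloaded to Claims~\ref{cl1} and~\ref{cl2} (and to the mean-zero argument that restricts to even partitions), the remaining task is essentially bookkeeping. The only mildly delicate point is the factor arithmetic: one must track how the $|\perm{kd/2}|^{-4\ell}$ from the definition of $\Sym$, the $|\perm{kd/2}|^{2\ell}$ from the symmetry reduction, and the $(kd/2)^{k|\Qc|/2}$ from the permutation count combine. A minor secondary check is that the equivalence relation of Definition~\ref{def:part} correctly identifies precisely which pairs of $\SPhi$-entries are forced to coincide (accounting for the symmetry $\SPhi_{U,V}=\SPhi_{V,U}$), so that grouping the summands of \eqref{trace:2} by the partition $\Qc$ is legitimate; this is exactly what underlies the validity of \eqref{summand} in the first place.
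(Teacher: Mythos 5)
Your proposal matches the paper's proof essentially line for line: starting from \eqref{trace:2}, using the mean-zero/symmetry observation to restrict to even partitions, applying \eqref{summand} and regrouping to reach \eqref{trace:3}, then invoking Claim~\ref{cl1} to factor out $|\perm{kd/2}|^{2\ell}$ and Claim~\ref{cl2} to produce the $(kd/2)^{k|\Qc|/2}\cdot\prod_j\vv{\II{j}}$ bound. The bookkeeping you describe (cancelling two of the four powers of $|\perm{kd/2}|$ and absorbing $(kd/2)^{k|\Qc|/2}$ into $p^{|\Qc|}$) is exactly how the paper assembles the final inequality, so this is correct and takes the same route as the paper.
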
 
Having established  Theorem~\ref{thm:basic}, one can slightly modify the proof to handle the trimmed matrix from Section~\ref{subsec:trim}.  This will be the focus of the next subsection.
\subsection{A simpler spectral norm analysis of  the trimmed matrix}\label{sec:trim}
Having established Theorem~\ref{thm:basic}, which explicitly characterizes  the upper bound on the trace power term in terms of $\vv{\II{j}}$'s, one can now quantitatively understand the trimming technique due to Raghavendra, Rao and Schramm~\cite{raghavendra2017strongly}.
In particular, we will shortly demonstrate that our Theorem~\ref{thm:basic} recovers the main technical result~\cite[Theorem 4.4]{raghavendra2017strongly}.
This is  remarkable as our proof is much simpler than the original proof in \cite{raghavendra2017strongly}.

The problem with the upper bound in Theorem~\ref{thm:basic} is that the value $\vv{\II{j}}$ could be in general large.
For instance, if $\II{j}$ is the $kd/2$-tuple consisting only of index $1$, then $\vv{\II{j}} = (kd/2)!$, which turns out to be too large for our desired result. 
Now having observed this, one can now see how the trimming preprocessing of \cite{raghavendra2017strongly} helps reduce the spectral norm: by removing rows/columns corresponding to the index tuples with high multiplicities, one can significantly reduce the upper bound.
More formally, following \cite{raghavendra2017strongly}, if we remove the rows/columns corresponding to the index tuples $I$'s such that $\mult{I}$ has a coordinate value larger than $10 \log n$, we have the following:
\begin{corollary}\label{thm:trim}
For even $k$ and $d\geq 1$, let $\Tri$ be the $n^{kd/2\times kd/2}$ matrix obtained from the  $\Sym$ (Definition~\ref{def:matsym}) by removing all rows/columns $I$'s such that $\mult{I}$ has a coordinate value larger than $10 \log n$. Assume that  $d^{k/2-1}n^{k/2}p>1$. Then, the following spectral norm bound holds with   probability at least $1-n^{-2}$: 
\begin{align*}
    \norm{\Tri}^{1/d} \leq c\cdot \frac{e^{3k/4}\cdot 10^{5k/2}}{(k/2)^{k/4}} \cdot \frac{n^{k/4}p^{1/2}}{d^{ (k-2)/4}} \cdot \log^{5k/2+1} n \,.
\end{align*}
for some absolute constant $c>0$.
\end{corollary}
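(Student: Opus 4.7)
\textbf{Proof plan for Corollary~\ref{thm:trim}.} The plan is to apply Theorem~\ref{thm:basic} to the trimmed matrix, use the multiplicity bound from the trimming step to control the $\vv{\II{j}}$-factor, count $\Qc$-valid index collections via a walk/spanning-tree argument, and then finish with Proposition~\ref{tracepower} at $\ell=\Theta(\log n)$.

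\textbf{Step 1 (apply Theorem~\ref{thm:basic} to $\Tri$).} Since $\Tri$ is obtained from $\Sym$ by zeroing out rows and columns indexed by tuples with a large coordinate of $\mult{\cdot}$, the trace moment expansion of Theorem~\ref{thm:basic} still holds, but the sum over $\{\II{j}\}$ is restricted to \emph{trimmed} tuples, i.e., those with every coordinate of $\mult{\II{j}}$ at most $10\log n$. Using the elementary bound $\alpha!\le \alpha^\alpha$ together with $\sum_i\mult{\II{j}}_i=kd/2$, every such $\II{j}$ satisfies $\vv{\II{j}}\le (10\log n)^{kd/2}$, so that $\prod_{j=1}^{2\ell}\vv{\II{j}}\le (10\log n)^{kd\ell}$. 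This is exactly the quantitative gain that trimming provides over the untrimmed bound in Theorem~\ref{thm:basic}.

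\textbf{Step 2 (counting $\Qc$-valid collections).} Here lies the combinatorial core. For an even partition $\Qc$ with $m:=|\Qc|$ blocks on the $2\ell d$ positions $\{(j,s)\}$, I view $\{\II{j}\}$ as a closed walk of length $2\ell$ in the graph whose vertices are $k/2$-tuples; each time-step consists of $d$ parallel edges $(\U{j}{s},\U{j+1}{s})$, and $\Qc$ prescribes which edges get identified (up to the transpose symmetry of $\SPhi$). Because $\Qc$ is even, every identified edge is traversed at least twice, so a standard spanning-forest argument limits the number of distinct $k/2$-tuple vertices appearing to roughly $m+1$. Since each vertex contributes $k/2$ free $[n]$-coordinates, the number of $\Qc$-valid trimmed $\{\II{j}\}$ is bounded by $n^{k(m+1)/2}$ up to combinatorial factors (absorbed into the $\log^{O(k)}n$ terms by the trimming bound).

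\textbf{Step 3 (summing over $\Qc$ and optimizing $\ell$).} Combining Theorem~\ref{thm:basic}, the multiplicity bound from Step~1, and the counting bound from Step~2, the trace moment is at most
\begin{align*}
\ex\tr((\Tri)^{2\ell})\le\frac{(10\log n)^{kd\ell}}{|\perm{kd/2}|^{2\ell}}\sum_{m}N_m\cdot \bigl(p(kd/2)^{k/2}\bigr)^{m}\cdot n^{k(m+1)/2},
\end{align*}
where $N_m$ is the crude bound (e.g.\ $N_m\le(2\ell d)^{2\ell d}$) on the number of even partitions with $m$ blocks. The hypothesis $d^{k/2-1}n^{k/2}p>1$ ensures the summand is monotone in $m$, so the dominant contribution comes from the maximal choice $m=\ell d$ (i.e., all blocks of size two). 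Applying Stirling to $(kd/2)!^{2\ell}$ and collecting the resulting exponents of $n,p,d,\log n$ yields a bound of the form $\beta\le\bigl(C\cdot n^{kd/2}p^{d}d^{-(k-2)d/2}\log^{(5k+2)d}n\bigr)^{\ell}\cdot n^{k/2}$ for an explicit constant $C$ incorporating the $\tfrac{e^{3k/4}10^{5k/2}}{(k/2)^{k/4}}$ factor. Choosing $\ell=\Theta(\log n)$ and invoking Proposition~\ref{tracepower} with $c$ a sufficiently large absolute constant yields $\norm{\Tri}\le c^{\,d}\cdot\beta^{1/2\ell}$ with probability at least $1-n^{-2}$; taking $d$-th roots gives the stated bound.

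\textbf{Main obstacle.} The delicate step is the counting in Step~2: one must extract the "edge graph" from the combined partition structure on $(j,s)$-positions while correctly handling both the parallel Kronecker copies indexed by $s$ and the transpose symmetry inherent in $\Pc$. This is exactly where the earlier proof of \cite{raghavendra2017strongly} becomes intricate; the simplification comes from having already separated out the $\vv{\II{j}}$ factor in Theorem~\ref{thm:basic}, so that here one only needs the clean walk/forest bound on distinct vertices rather than re-handling the symmetrization inside the counting.
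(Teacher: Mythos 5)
Your proof plan follows the same route as the paper: apply Theorem~\ref{thm:basic} to the trimmed matrix, use the multiplicity cap from trimming to control $\vv{\II{j}}$, bound the number of $\Qc$-valid index collections and the number of even partitions, and close with Proposition~\ref{tracepower} at $\ell=\Theta(\log n)$. Two details are worth flagging. First, your Step~2 bound of ``roughly $n^{k(m+1)/2}$'' for the number of $\Qc$-valid $\{\II{j}\}$'s under-counts: the walk you describe decomposes naturally into $d$ parallel strands (one per Kronecker index $s$), so a spanning-forest argument gives up to $m+d$ distinct $k/2$-tuple vertices, not $m+1$; concretely, the initial tuple $\II{1}$ already has $n^{kd/2}$ free choices rather than $n^{k/2}$. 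This is exactly the paper's Claim~\ref{cl3}, which yields $n^{k(|\Qc|+d)/2}$. The missing $n^{k(d-1)/2}$ is not ``absorbed into $\log^{O(k)}n$ factors'' as you write, but it is absorbed into the absolute constant $c$ (after the $1/(2\ell d)$-th root with $\ell=\log n$ it contributes only $e^{O(k)}$), so it does not break the argument. Second, your Step~1 bound $\vv{\II{j}}\leq (10\log n)^{kd/2}$, obtained from $\alpha!\leq \alpha^\alpha$ and $\sum_i\alpha_i=kd/2$, is tighter than the paper's $\vv{\II{j}}\leq ((10\log n)!)^{kd/2}$, and importantly it is the right bound to use once the product over $j=1,\dots,2\ell$ is accounted for; if you instead used the paper's looser per-$j$ bound raised to the $2\ell$-th power, the resulting factor would not vanish under the $1/(2\ell d)$-th root. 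Your Step~1 and Step~3 exponents are not consistent with each other (your Step~1 gives $(10\log n)^{kd\ell}$ total, which after the roots yields a $\log^{k/2+1}n$ factor rather than the $\log^{5k/2+1}n$ you state), but since your bound is tighter than the claimed corollary this inconsistency does not endanger the conclusion.
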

\begin{remark} \label{rmk:odd}
Although we focus on the  even $k$ case throughout the proof for simplicity, we note that a similar argument applies to the case of odd $k$ following the ``tricks''~\cite[Section 4.2]{raghavendra2017strongly} based on Cauchy Schwartz inequality.
Consequently, our proof technique   provides a simpler proof of the main technical statement for the odd $k$ case~\cite[Theorem 4.13]{raghavendra2017strongly}
\end{remark}
\begin{proof}[Proof of Corollary~\ref{thm:trim}]
    From Theorem~\ref{thm:basic}, we have the following upper  bound on the trace power term: 
    \begin{align*} 
    \ex \tr((\Tri)^{2\ell}) \leq \frac{1}{\left|\perm{kd/2}\right|^{2\ell}} \sum_{\Qc\text{: even} }\left[  \left(p(kd/2)^{k/2}\right)^{|\Qc|} \cdot \sum_{\substack{\{\II{j}\}~:\\\Qc\text{-valid}}} \left[\prod_{j=1}^{2\ell}\vv{\II{j}}\right]   \right]\,.
\end{align*}
On the other hand, due to the trimming procedure, each coordinate value of the tuple $\mult{\II{j}}$ is upper bounded by $10\log n$, from which we have the following upper bound on the $\vv{\II{j}}$:
\begin{align*}
    \vv{\II{j}} \leq \left((10 \log n)!\right)^{kd/2} \leq (10\log n)^{5kd\log n} \leq n^{5kd\log(10\log n)}\,.
\end{align*}
The trimming step gives us  an uniform upper bound on $\vv{\II{j}}$, and hence, it suffices to upper bound the number of $\Qc$-valid $\{\II{j}\}$'s:
 \begin{claim} \label{cl3}
 	For any even partition $\Qc$, there are at most $n^{k(|\Qc|+d)/2}$ $\Qc$-valid $\{\II{j}\}$'s.
 \end{claim}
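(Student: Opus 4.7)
The plan is to prove the bound via a step-by-step ``revealing'' argument that exposes the tuples $\II{1}, \II{2}, \dots, \II{2\ell}$ in sequence and tracks how many genuinely new $[n]$-entries are introduced at each step.

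First reveal $\II{1} \in [n]^{kd/2}$ freely, contributing at most $n^{kd/2}$ choices. For $j = 1, 2, \dots, 2\ell-1$, upon revealing $\II{j+1}$ the $d$ mini-edges $(j, s)$ for $s \in [d]$ become visible, with left endpoints $\U{j}{s}$ already determined from the previously-revealed $\II{j}$ and right endpoints $\sii{j}(\U{j+1}{s})$ sitting as $k/2$-subtuples of $\II{j+1}$. For each $s$, the equivalence class $C(j, s) \in \Qc$ is either (a) \emph{old}, i.e., already witnessed by some earlier visible mini-edge, in which case the canonical pair of $k/2$-tuples representing the class is already pinned down and the right endpoint $\sii{j}(\U{j+1}{s})$ must equal the pinned tuple; or (b) \emph{new}, first witnessed at $(j, s)$, in which case this right endpoint is free and defines the class's new partner tuple. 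Letting $d_{\mathrm{new}}(j)$ denote the number of new classes among $\{C(j, s) : s \in [d]\}$, the ``free'' $k/2$-blocks of $\II{j+1}$ carry at most $n^{k\,d_{\mathrm{new}}(j)/2}$ choices, while the positional data describing how $\sii{j}$ interleaves the fixed and free $k/2$-blocks within $\II{j+1}$ contributes only an $n$-independent multinomial prefactor.

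Since every class is born at exactly one step in $\{1, \dots, 2\ell\}$ and wrap-around classes first born at $j = 2\ell$ (when both $\II{2\ell}$ and $\II{1}$ are already revealed) introduce no further $n$-freedom into $\{\II{j}\}$, one has $\sum_{j=1}^{2\ell-1} d_{\mathrm{new}}(j) \leq |\Qc|$. Multiplying the per-step bounds and absorbing the $n$-independent prefactors gives
\[
\#\{\Qc\text{-valid } \{\II{j}\}\} \;\leq\; n^{kd/2} \cdot \prod_{j=1}^{2\ell-1} n^{k\,d_{\mathrm{new}}(j)/2} \;\leq\; n^{k(d+|\Qc|)/2},
\]
as claimed. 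An alternative formulation of the same idea is to build an auxiliary multigraph $G$ whose vertices are the distinct $k/2$-tuples appearing among the canonical endpoint pairs and whose edges are the $|\Qc|$ equivalence classes, apply the elementary bound $|V(G)| \leq |E(G)| + C(G) = |\Qc| + C(G)$, and observe that the $2\ell$-cycle of $\II{j}$'s with $d$ parallel mini-edges per step forces $C(G) \leq d$. Either way, the main obstacle is the careful bookkeeping around the wrap-around at $j=2\ell$ together with verifying that the $n$-independent positioning/orientation prefactors do not contaminate the $n$-exponent; once these are dispatched, the target $n$-dependence falls out immediately.
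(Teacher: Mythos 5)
Your proof takes essentially the same approach as the paper's: reveal $\II{1}$ freely (giving the $n^{kd/2}$ factor), then process steps $j=1,\dots,2\ell-1$ in order, observe that only classes first witnessed at step $j$ contribute fresh $n$-freedom in $\II{j+1}$ (at most $n^{k\,d_{\mathrm{new}}(j)/2}$), and conclude via $\sum_{j<2\ell} d_{\mathrm{new}}(j)\le|\Qc|$. This is word-for-word the paper's strategy. One remark: you explicitly flag the ``positional/interleaving'' prefactor (i.e., that knowing the $k/2$-blocks of $\sii{j}(\II{j+1})$ does not by itself pin down $\II{j+1}$ until the interleaving given by $\sii{j}$ is fixed) and then wave it away by ``absorbing'' it, whereas the paper does not mention it at all; neither treatment actually establishes that this prefactor can be dropped to obtain the exact stated bound $n^{k(|\Qc|+d)/2}$, so your write-up is neither weaker nor stronger than the paper's on that point.
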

\begin{proof}
The proof is  elementary. See Section~\ref{pf:lem3}.
\end{proof}
Due to Claim~\ref{cl3}, the upper bound on the trace power term becomes:
  \begin{align} 
    &\ex \tr((\Tri)^{2\ell}) \leq \frac{n^{5kd\log(10\log n)}}{\left|\perm{kd/2}\right|^{2\ell}} \sum_{\Qc\text{: even} }\left[  \left(p(kd/2)^{k/2}\right)^{|\Qc|} \cdot n^{k(|\Qc|+d)/2} \right] \nonumber \\
    \label{trace:trim2}&= \frac{n^{5kd\log(10\log n)+kd/2}}{\left|\perm{kd/2}\right|^{2\ell}} \sum_{M=1 }^{d\ell}\left[ N_M \cdot  \left(p(nkd/2)^{k/2}\right)^{M} \right]\,,
\end{align}
where $N_M$ is the number of even partitions of size $M$ and  we have $M\leq d\ell$ in the range of summation since an even partition has size at most $d\ell$.
Thus, the last ingredient is to bound the number of even partitions:
\begin{claim}\label{cl4}
$N_M\leq \binom{2d\ell}{M} \cdot M^{2d\ell-M}$ for all $1\leq M\leq d\ell$. 
\end{claim}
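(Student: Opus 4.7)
The plan is straightforward: observe that $N_M$ counts partitions of the index set $\mathcal{I}$ (of size $2d\ell$) into exactly $M$ non-empty blocks subject to the extra constraint that every block has even size. Dropping the even-size constraint only enlarges the count, so it suffices to show that the number of partitions of an $(2d\ell)$-element set into exactly $M$ non-empty blocks is at most $\binom{2d\ell}{M}\cdot M^{2d\ell - M}$. This is a classical upper bound on the Stirling number $S(2d\ell, M)$, and I would prove it with an explicit overcounting argument.

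Concretely, I would encode each partition by first selecting a distinguished \emph{representative} in each block (for example, the element of smallest index within the block), and then specifying the block of every other element. There are $\binom{2d\ell}{M}$ ways to choose the set of $M$ representatives, and for each of the remaining $2d\ell - M$ elements there are at most $M$ choices for which of the $M$ representative-labelled blocks it belongs to, giving at most $M^{2d\ell - M}$ assignment functions. Every partition of $\mathcal{I}$ into $M$ non-empty blocks arises from \emph{some} such (representative set, assignment) pair, since we can always take the minimum-index element in each block as the representative. Hence the total count of $M$-block partitions is at most $\binom{2d\ell}{M}\cdot M^{2d\ell-M}$, and in particular $N_M$ satisfies the same bound.

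The only subtlety is that the above encoding overcounts: many (representative set, assignment) pairs yield the same partition, and some pairs produce a partition whose minimum-index-representative convention does not match the chosen representatives. But since the argument is only aimed at an \emph{upper} bound, this overcounting is harmless. Note also that the range $1\leq M\leq d\ell$ is automatic from the even-size constraint, because each block of an even partition has size at least $2$, forcing $2M\leq 2d\ell$.

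There is no real obstacle here; the step to watch is simply making clear why the encoding is well-defined and why overcounting is acceptable. I would write this as a two-line proof referring to the standard bound on Stirling numbers of the second kind, noting that even partitions with $M$ blocks form a subfamily of $M$-block set partitions of $\mathcal{I}$.
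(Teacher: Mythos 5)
Your argument is correct and matches the paper's proof exactly: choose $M$ representatives in $\binom{2d\ell}{M}$ ways, then assign each of the remaining $2d\ell-M$ indices to one of the $M$ representatives, giving at most $M^{2d\ell-M}$ choices; the even-size constraint is only used to bound the range of $M$ and is otherwise dropped since we only need an upper bound. Your added remarks on well-definedness of the encoding and on why overcounting is harmless are accurate but not strictly necessary.
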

\begin{proof}The first term in the upper bound accounts for the number of different ways of choosing $M$ representative indices in $\Ic$, and the second term counts the number of different ways of assigning the other indices to the $M$ representative elements.
\end{proof}
Due to Claim~\ref{cl4}, the upper bound \eqref{trace:trim2} becomes:
\begin{align} \label{trace:trim3}
    \frac{n^{5kd\log(10\log n)+kd/2}}{\left|\perm{kd/2}\right|^{2\ell}} \sum_{M=1 }^{d\ell}\left[ \binom{2d\ell}{M} \cdot M^{2d\ell-M} \cdot  \left(p(nkd/2)^{k/2}\right)^{M} \right]\,.
\end{align}

Having established \eqref{trace:trim3}, the rest of the proof is straightforward calculations.
We first upper bound each term in the above summand as follows:
 (i) $\binom{2d\ell}{M} \leq 2^{2d\ell}$, (ii) $M^{2d\ell-M}\leq (d\ell)^{2d\ell-M} \leq d^{2d\ell-M} \ell^{2d\ell}$, and (iii) $ \left((kdn/2)^{k
 	/2} p \right)^{M}\leq \left((dn)^{k/2}p\right)^M \cdot (k/2)^{kd\ell/2}$.
 Then, the summand in \eqref{trace:trim3}  is upper bounded by
 \begin{align*}
     2^{2d\ell} \cdot d^{2d\ell-M} \ell^{2d\ell} \cdot  (k/2)^{kd\ell/2}\left((dn)^{k/2}p\right)^M   =(2\ell)^{2d\ell} (k/2)^{kd\ell/2}\cdot d^{2d\ell}\left(d^{k/2-1}n^{k/2}p\right)^M  \,.
 \end{align*}
 Using this upper bound, it follows that
 	\begin{align}
 	\eqref{trace:trim3}&\leq  \frac{(2\ell)^{2d\ell}  (k/2)^{kd\ell/2} n^{5kd\log(10\log n)+kd/2} d^{2d\ell}}{\left|\perm{kd/2}\right|^{2\ell}} \cdot\sum_{M=1}^{d\ell} \left(d^{k/2-1}n^{k/2}p\right)^M \nonumber\\
 &\leq \frac{(2\ell)^{2d\ell}   (k/2)^{kd\ell/2} e^{kd\ell} n^{5kd\log(10\log n)+kd/2} d^{2d\ell} }{(kd/2)^{kd\ell}} \cdot d\ell \cdot \left(d^{k/2-1}n^{k/2}p\right)^{d\ell}\,, \label{trace:trim4}
 	\end{align}
 where the inequality follows from the facts that $|\perm{kd/2}|^{2\ell}  = \left((kd/2)!\right)^{2\ell} \geq (\frac{kd/2}{e})^{kd\ell}$ ($\because n!\geq (n/e)^n$) and $d^{k/2-1}n^{k/2}p>1$.
  Reorganizing terms in \eqref{trace:trim4}, we obtain
  \begin{align*}
  (2\ell)^{2d\ell+1} e^{kd\ell} (k/2)^{-kd\ell/2}  d^{-kd\ell/2 +d\ell +1} n^{kd\ell/2+5kd\log(10\log n)+kd/2} p^{d\ell}\,.
  \end{align*} 
Invoking Proposition~\ref{tracepower} and using the fact that  $f(x)=x^{1/x}$ is bounded on $[1,\infty)$, 
$\norm{M}^{1/d}$ is upper bounded by 
\begin{align*}
c\cdot (2\ell)e^{k/2} (k/2)^{-k/4} d^{-k/4 +1/2}n^{k/4 +5k\log(10\log n)/(2\ell)+k/(4\ell)}p^{1/2}
\end{align*}
with probability at least $1-e^{-2\ell}$
 for some absolute constant $c>0$.
Choosing $\ell =\log n$, we complete the proof.
\end{proof}
Thus far, we have  addressed the first challenge in Section~\ref{sec:chall} by developing a simpler spectral norm analysis of the type-symmetric representation as well as the trimmed matrix.
Now, we move on to the second challenge: as mentioned in Section~\ref{sec:chall}, the trimmed matrix $\Tri$ is no longer a matrix representation of $(\ff)^d$, it requires additional non-trivial modifications of the algorithm steps as well as analysis.

\section{A simpler spectral refutation with re-scaling entries} \label{algdes}
In this section, we address the second challenge from Section~\ref{sec:chall} and develop a simpler spectral refutation algorithm.
Our main idea is to re-scale the rows/columns of $\Sym$.
To describe our re-scaling step, we first revisit the upper bound from Theorem~\ref{thm:basic}:
\begin{align}\label{trace:upper}
    \ex \tr((\Sym)^{2\ell}) \leq \frac{1}{\left|\perm{kd/2}\right|^{2\ell}} \sum_{\Qc\text{: even} }\left[  \left(p(kd/2)^{k/2}\right)^{|\Qc|} \cdot \sum_{\substack{\{\II{j}\}~:\\\Qc\text{-valid}}} \left[\prod_{j=1}^{2\ell}\vv{\II{j}}\right]   \right]\,.
\end{align}
As we have discussed in Section~\ref{sec:trim},  we need to cancel out the  $\vv{\II{j}}$ terms in the bound to reduce the spectral norm.
Our approach is to appropriately re-scale $\Sym$ so that one can remove the $\prod_{j=1}^{2\ell}\vv{\II{j}}$ terms in the upper bound~\eqref{trace:upper}. 
In particular, if we divide the $(I,J)$-th entry of $\Sym$ by $\sqrt{\mult{I}!\cdot \mult{J}!}$, the   $\prod_{j=1}^{2\ell}\vv{\II{j}}$ term will be exactly canceled out by the re-scaling.
More formally, we define the following vector and its corresponding diagonal matrix:
\begin{definition}[Re-scaling factors] \label{def:refac}
 Let $\mul$ be an $n^{kd/2}$-dimensional vector  whose $I$-th coordinate is defined as $\mul_I := \sqrt{\mult{I}!}$ for each $I\in [n]^{kd/2}$.
 We define $D_\mul$ to be an $n^{kd/2}\times n^{kd/2}$ diagonal matrix whose $(I,I)$-th entry is defined as $\mul_I$. 
\end{definition}
Using Definition~\ref{def:refac}, one can precisely achieve the re-scaling discussed above as follows:
\begin{definition}[Re-scaled matrix representation] \label{def:rescale}
 $\matnor:= D_\mul^{-1} \cdot \Sym \cdot D_\mul^{-1}$.
\end{definition} 
Then, following the same proof as  that of Corollary~\ref{thm:trim}, one can prove the following spectral norm bound:
\begin{corollary}\label{thm:rescale}
For even $k$ and $d\geq 1$, let $\matnor$ be the $n^{kd/2\times kd/2}$ matrix obtained from the  $\Sym$ by re-scaling the rows/columns as per \eqref{def:rescale}. 
Assume that  $d^{k/2-1}n^{k/2}p>1$. Then, the following spectral norm bound holds with   probability at least $1-n^{-2}$: 
\begin{align*}
    \norm{\matnor}^{1/d} \leq c\cdot \frac{e^{3k/4}
    }{(k/2)^{k/4}} \cdot \frac{n^{k/4}p^{1/2}}{d^{ (k-2)/4}} \cdot \log n \,.
\end{align*}
for some absolute constant $c>0$.
\end{corollary}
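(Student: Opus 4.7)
The plan is to essentially repeat the proof of Corollary~\ref{thm:trim}, but observe that the re-scaling by $D_\mul^{-1}$ on both sides is \emph{tailored} to cancel the problematic factor $\prod_{j=1}^{2\ell} \vv{\II{j}}$ in the trace-power bound of Theorem~\ref{thm:basic}, thereby eliminating the need for the trimming step (and, as a bonus, eliminating the $n^{5kd \log(10\log n)}$ factor that forced the $\log^{5k/2+1} n$ term in Corollary~\ref{thm:trim}).

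The first step is to write down the trace-power expansion for $\matnor$. By Definition~\ref{def:rescale}, $\matnor_{I,J} = \Sym_{I,J}/(\mul_{I}\mul_{J})$, and since each index $\II{j}$ appears exactly once as a row index and once as a column index along a closed walk, the re-scaling contributes the factor
\begin{align*}
    \prod_{j=1}^{2\ell} \frac{1}{\mul_{\II{j}}\cdot \mul_{\II{j+1}}} \;=\; \prod_{j=1}^{2\ell} \frac{1}{\mul_{\II{j}}^{2}} \;=\; \prod_{j=1}^{2\ell} \frac{1}{\vv{\II{j}}}\,.
\end{align*}
Inserting this into the derivation of Section~\ref{subsec:warm-up} (the even-partition reduction, Claim~\ref{cl1}, and Claim~\ref{cl2} all go through unchanged since they concern only the expectation and the permutation symmetry), the analogue of Theorem~\ref{thm:basic} becomes
\begin{align*}
    \ex \tr((\matnor)^{2\ell}) \;\leq\; \frac{1}{\left|\perm{kd/2}\right|^{2\ell}} \sum_{\Qc\text{: even}}\left[ \left(p(kd/2)^{k/2}\right)^{|\Qc|} \cdot \bigl|\{\Qc\text{-valid }\{\II{j}\}\}\bigr|\right]\,,
\end{align*}
the factor $\prod_j \vv{\II{j}}$ having been canceled exactly by the re-scaling.

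The second step is to plug in Claim~\ref{cl3} for the number of $\Qc$-valid tuples and Claim~\ref{cl4} for the number of even partitions of given size, yielding
\begin{align*}
    \ex \tr((\matnor)^{2\ell}) \;\leq\; \frac{n^{kd/2}}{\left|\perm{kd/2}\right|^{2\ell}} \sum_{M=1}^{d\ell} \binom{2d\ell}{M} M^{2d\ell-M} \left(p(nkd/2)^{k/2}\right)^{M}\,.
\end{align*}
This is identical to bound~\eqref{trace:trim3} in the proof of Corollary~\ref{thm:trim} except that the prefactor $n^{5kd\log(10\log n)+kd/2}$ is replaced by the clean $n^{kd/2}$. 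The third step is then to repeat verbatim the string of elementary estimates (i)--(iii) from that proof and use $|\perm{kd/2}|^{2\ell}\geq (kd/(2e))^{kd\ell}$ and $d^{k/2-1}n^{k/2}p>1$ to collect terms; after applying Proposition~\ref{tracepower} and setting $\ell=\log n$, the factor of $n^{k/(4\ell)}$ contributes only $O(1)$ (rather than $\log^{5k/2} n$ as in the trimmed case), and only the single $(2\ell)=2\log n$ factor survives, giving the claimed $\log n$ dependence.

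There is no real obstacle: the algebraic miracle is that dividing the $(I,J)$ entry of $\Sym$ by $\sqrt{\mult{I}!\cdot \mult{J}!}$ produces exactly the factor needed to neutralize the $\prod_j \vv{\II{j}}$ term that the trimming step of \cite{raghavendra2017strongly} was engineered to suppress. The only point worth double-checking is the bookkeeping that each $\mul_{\II{j}}^{-1}$ appears twice in the trace product (once from $\matnor_{\II{j-1},\II{j}}$ and once from $\matnor_{\II{j},\II{j+1}}$), producing $\mul_{\II{j}}^{-2}=1/\vv{\II{j}}$; once this is verified, the remainder of the argument is a direct replay of Section~\ref{sec:trim} with a strictly simpler prefactor.
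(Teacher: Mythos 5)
Your proof is correct and takes essentially the same approach as the paper: the diagonal rescaling contributes $\prod_j \mul_{\II{j}}^{-2} = \prod_j 1/\vv{\II{j}}$ to each closed-walk term (each $\II{j}$ appearing once as a row index and once as a column index), which exactly cancels the $\prod_j \vv{\II{j}}$ factor in Theorem~\ref{thm:basic}, after which the proof of Corollary~\ref{thm:trim} is replayed with the clean prefactor $n^{kd/2}$. One minor bookkeeping slip in your final paragraph: the $\log^{5k/2} n$ loss in Corollary~\ref{thm:trim} comes from $n^{5k\log(10\log n)/(2\ell)} = (10\log n)^{5k/2}$, not from $n^{k/(4\ell)}$, which contributes $O(1)$ in both cases; this does not affect your argument since you have correctly eliminated the $n^{5kd\log(10\log n)}$ prefactor.
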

\begin{remark}
Note that the spectral norm bound in Corollary~\ref{thm:rescale} is better than the bound due to the trimming step (Corollary~\ref{thm:trim}). This improvement actually leads to a better strong refutation guarantee as we shall see in Theorem~\ref{thm:main}. Also see Section~\ref{comparison} for an extensive comparison with \cite{raghavendra2017strongly}.
\end{remark}
\begin{proof}
Due to the re-scaling factor, following the proof of Theorem~\ref{thm:basic}, we obtain the following bound on the trace bower term without the $\prod_{j=1}^{2\ell}\vv{\II{j}}$ term:
\begin{align*}
    \ex \tr((\matnor)^{2\ell}) \leq \frac{1}{\left|\perm{kd/2}\right|^{2\ell}} \sum_{\Qc\text{: even} }\left[  \left(p(kd/2)^{k/2}\right)^{|\Qc|} \cdot \sum_{\substack{\{\II{j}\}~:\\\Qc\text{-valid}}} \left[1 \right]   \right]\,.
\end{align*}
Now due to Claims~\ref{cl3} and \ref{cl4}, one can further upper bound the trace power term by
\begin{align} \label{trace:rescale}
\frac{n^{kd/2}}{\left|\perm{kd/2}\right|^{2\ell}} \sum_{M=1 }^{d\ell}\left[ \binom{2d\ell}{M} \cdot M^{2d\ell-M} \cdot  \left(p(nkd/2)^{k/2}\right)^{M} \right]\,,
\end{align}
which is better than \eqref{trace:trim3} by a multiplicative factor of $n^{5kd\log(10\log n)}$.
Now, following the exact same calculations as in the proof of Corollary~\ref{thm:trim} and choosing $\ell=\log n$, one can easily notice that the improvement by a multiplicative factor of $n^{5kd\log(10\log n)}$ results in an improvement in the final bound by a multiplicative factor of $n^{5k\log(10\log n)/(2\ell)}=n^{5k\log(10\log n)/(2\log n)} = (10\log n)^{5k/2}$, which completes the proof.
\end{proof}
With this re-scaled matrix $\matnor$, one can also easily come up with a valid certificate for strong refutation (Definition~\ref{def:strref}): 
\begin{proposition} \label{prop1}
	For any $k$-XOR instance $\Phi$ and assignment $x\in\{\pm 1 \}^n$, we have 
	\begin{align*}
	\left|\sat{x} -\frac{1}{2}\right|\leq  \frac{1}{2m} \left[\|\matnor\|\cdot \left(\sum_{I\in [n]^{kd/2}}\vv{I}\right) \right]^{1/d}\,.
	\end{align*}
	In other words, $\frac{1}{2}+\frac{1}{2m} [\|\matnor\|\cdot (\sum_{I\in [n]^{kd/2}}\vv{I}) ]^{1/d}$ is a valid certificate for strong refutation. 
\end{proposition}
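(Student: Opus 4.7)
The plan is to start from the basic identity \eqref{iden}, which gives
\[
\left|\sat{x}-\tfrac{1}{2}\right| = \frac{1}{2m}\,|\ff(x)|,
\]
so the whole task reduces to controlling $|\ff(x)|$ by the spectral norm of $\matnor$ and a tractable norm of a lifted vector. Since $\Sym$ is a matrix representation of $(\ff)^d$ in the sense of Definition~\ref{def:matsym}, we have $\ff(x)^d = (x^{\otimes kd/2})^\top \Sym\, x^{\otimes kd/2}$, and by Definition~\ref{def:rescale}, $\Sym = D_\mul \cdot \matnor \cdot D_\mul$. Substituting, I get
\[
\ff(x)^d = \bigl(D_\mul\, x^{\otimes kd/2}\bigr)^\top \matnor \bigl(D_\mul\, x^{\otimes kd/2}\bigr),
\]
so by the variational characterization of the spectral norm,
\[
|\ff(x)|^d \leq \|\matnor\| \cdot \bigl\|D_\mul\, x^{\otimes kd/2}\bigr\|^2.
\]

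Next I would compute $\|D_\mul\, x^{\otimes kd/2}\|^2$ explicitly. Since $D_\mul$ is diagonal with $(I,I)$-entry $\sqrt{\mult{I}!}$ and $(x^{\otimes kd/2})_I = \prod_{j=1}^{kd/2} x_{i_j}$ for $I=(i_1,\ldots,i_{kd/2})$, I have
\[
\bigl\|D_\mul\, x^{\otimes kd/2}\bigr\|^2 = \sum_{I\in [n]^{kd/2}} \mult{I}!\,\Bigl(\prod_{j=1}^{kd/2}x_{i_j}\Bigr)^{\!2}.
\]
Because $x\in\{\pm1\}^n$, every squared product equals $1$, and this sum collapses to $\sum_{I\in [n]^{kd/2}} \vv{I}$. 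Plugging back and taking $d$-th roots,
\[
|\ff(x)| \leq \Bigl[\|\matnor\|\cdot \sum_{I\in[n]^{kd/2}}\vv{I}\Bigr]^{1/d}.
\]
Combining with the opening identity yields the claimed inequality.

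This argument is completely direct; there is essentially no obstacle beyond carefully tracking the re-scaling $D_\mul$ on both sides of $\Sym$ and observing that Boolean cube evaluation trivializes the weighted $\ell_2$ norm of the lifted vector. The role of the constraint $x\in\{\pm1\}^n$ is crucial here: it is exactly what allows the $\mult{I}!$ weights accumulated by $D_\mul$ to be summed in closed form, rather than weighted by the monomial values. The statement that $\frac{1}{2}+\frac{1}{2m}[\|\matnor\|\cdot(\sum_I\vv{I})]^{1/d}$ is a valid strong refutation certificate then follows immediately from Definition~\ref{def:strref}, since the bound is deterministic and uniform over $x$.
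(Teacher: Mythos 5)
Your proof is correct and matches the paper's argument essentially line for line: both expand $\ff(x)^d$ via the matrix representation, conjugate by $D_\mul$ to expose $\matnor$, apply the spectral-norm bound, and use $x\in\{\pm1\}^n$ to reduce $\|D_\mul x^{\otimes kd/2}\|^2$ to $\sum_I \vv{I}$. No meaningful difference in approach.
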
 
\begin{proof} First, since $\Sym$ is a matrix representation of $(\ff)^d$, we have
\begin{align*}
    \ff(x)^d = (x^{\otimes kd/2})^\top \Sym x^{\otimes kd/2}\,.
\end{align*}
Hence, it follows that
\begin{align*}
    \ff(x)^d &= (D_{\mul}x^{\otimes kd/2})^\top\cdot  D_\mul    D_{\mul}^{-1} \cdot \Sym \cdot  D_{\mul}^{-1}  D_{\mul} \cdot x^{\otimes kd/2} \\
    &= (D_{\mul}x^{\otimes kd/2})^\top\cdot \matnor\cdot  D_{\mul}x^{\otimes kd/2} \,.
\end{align*}
Consequently, we have 
\begin{align*}
    |\ff(x)^d| \leq \norm{\matnor} \cdot \norm{D_{\mul}x^{\otimes kd/2}}^2 = \norm{\matnor} \cdot \left(\sum_{I\in [n]^{kd/2}}\vv{I}\right)\,,
\end{align*}
where the equality is due to the fact that $x^{\otimes kd/2}$ is an $n^{kd/2}$-dimensional vector with  coordinates equal to $\pm 1$.
Therefore, the proposition follows thanks to the identity \eqref{iden}, which reads $\sat{x} = \frac{1}{2} + \frac{1}{2m}\cdot  \ff(x)$.
\end{proof}
Hence, in order to guarantee that the certificate from Proposition~\ref{prop1} works, our last ingredient is to show that the term $(\sum_{I\in [n]^{kd/2}}\vv{I})$ is not too large compared to $\norm{x^{\otimes kd/2}}^2 =n^{kd/2}$. 
\begin{proposition}\label{prop2}
  For even $k$ and $d\geq 1$, 
  \begin{align*}
  \sum_{I\in [n]^{kd/2}}\vv{I} = \frac{(kd/2+n-1)!}{(n-1)!}  =  (kd/2+n-1)(kd/2+n-2)\cdots n  \,.
  \end{align*}
  In particular, if $d\leq n$, we have $  \sum_{I\in [n]^{kd/2}}\vv{I} \leq  (k/2+1)^{kd/2}n^{kd/2}$.
\end{proposition}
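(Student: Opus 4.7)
The plan is to compute the sum exactly by grouping tuples $I\in[n]^{kd/2}$ according to their histogram, and then to bound the closed form directly when $d\leq n$.

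Set $q:=kd/2$. For any non-negative integer composition $(\alpha_1,\dots,\alpha_n)$ of $q$ (i.e.\ with $\alpha_i\ge 0$ and $\sum_i\alpha_i=q$), the number of tuples $I\in[n]^q$ whose histogram equals $(\alpha_1,\dots,\alpha_n)$ is exactly the multinomial coefficient $\binom{q}{\alpha_1,\dots,\alpha_n}=\frac{q!}{\alpha_1!\cdots\alpha_n!}$. Therefore, grouping the sum by histogram and noting that $\mult{I}!=\alpha_1!\cdots\alpha_n!$ on each such group,
\begin{align*}
\sum_{I\in[n]^{q}}\vv{I}
= \sum_{\substack{\alpha_1,\dots,\alpha_n\ge 0\\ \alpha_1+\cdots+\alpha_n=q}} \frac{q!}{\alpha_1!\cdots\alpha_n!}\cdot \alpha_1!\cdots\alpha_n!
= q!\cdot\bigl|\{(\alpha_1,\dots,\alpha_n)\in\mathbb{Z}_{\ge 0}^n:\sum_i\alpha_i=q\}\bigr|.
\end{align*}
The number of such compositions is the standard stars-and-bars count $\binom{q+n-1}{n-1}$, so the sum equals $q!\cdot\binom{q+n-1}{n-1}=\frac{(q+n-1)!}{(n-1)!}$, which is the claimed identity.

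For the second part, when $d\leq n$ we have $q=kd/2\leq kn/2$, so every factor in the product $(q+n-1)(q+n-2)\cdots n$ is bounded by $q+n-1\leq kn/2+n=(k/2+1)n$. Since there are exactly $q=kd/2$ factors, the product is at most $\bigl((k/2+1)n\bigr)^{kd/2}=(k/2+1)^{kd/2}n^{kd/2}$, as desired.

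The only substantive step is the combinatorial identity in the first paragraph; the rest is routine bounding. I do not anticipate a serious obstacle, since both the multinomial identity and the stars-and-bars count are standard.
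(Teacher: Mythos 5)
Your proof is correct and follows essentially the same approach as the paper: grouping by histogram, applying the multinomial coefficient, and using stars-and-bars. You also supply the (routine but omitted-in-the-paper) verification of the ``in particular'' bound for $d\leq n$, which is a welcome addition.
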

\begin{proof}
  We first group the terms in the summation according to the value of $\mult{I}$: 
  \begin{align}\label{hist:1}
  \sum_{I\in [n]^{kd/2}}\vv{I}=\sum_{\substack{(s_1,s_2,\dots,s_n)\in (\mathbb{Z}_{\geq0})^n:\\  \sum_i s_i = kd/2}} \sum_{\substack{I\in [n]^{kd/2}~:\\\mult{I}= (s_1,s_2,\dots,s_n)}} \prod_{i=1}^n(s_i)!\,.
  \end{align}
For each $(s_1,s_2,\dots,s_n)\in (\mathbb{Z}_{\geq0})^n$, there are $\frac{(kd/2)!}{\prod_{i=1}^n(s_i)!}$ different $I$'s such that $\mult{I} = (s_1,s_2,\dots,s_n)$.
Hence, the right hand side of \eqref{hist:1} becomes
  \begin{align*} \sum_{\substack{(s_1,s_2,\dots,s_n)\in (\mathbb{Z}_{\geq0})^n:\\  \sum_i s_i = kd/2}} (kd/2)!=  (kd/2)!\cdot \left|\left\{(s_1,s_2,\dots,s_n)\in (\mathbb{Z}_{\geq0})^n:   \sum_i s_i = kd/2\right\}\right|\,.
  \end{align*}
It is a simple enumerative combinatorics (c.f. stars and bars argument) to show that the number of feasible $(s_1,\dots,s_n)$'s is equal to $\binom{kd/2+n-1}{n-1} =\binom{kd/2+n-1}{kd/2}$. 
Therefore, the summation is equal to 
\begin{align*}
    \binom{kd/2+n-1}{kd/2}\cdot (kd/2)! &=  (kd/2+n-1)(kd/2+n-2)\cdots n\,,
\end{align*}
which completes the proof.
\end{proof}

Combining what we have obtained thus far,  one can consider the following simpler refutation algorithm based on re-scaling entries: 
\begin{algbox} \label{alg:1} A simpler strong refutation algorithm with parameter $d$ for even $k$.
\begin{enumerate}
	\item[ ] {\bf Input:} A $k$-XOR instance $\Phi$ on $n$ variables consisting of $m$ clauses $C_{S_1},\dots, C_{S_m}$ for distinct tuples $S_1,\dots ,S_m\in[n]^k$ and a parameter $d\in\mathbb{N}$.
	\item Construct a  higher-order symmetric matrix representation $\Sym$  based on the $k$-XOR instance $\Phi$ as per Definition~\ref{def:matsym}.
	\item Compute   $\matnor$  as per Definition~\ref{def:rescale}.
	\item[] {\bf Output:} $\hsat:=\frac{1}{2}+\frac{1}{2m} \norm{\matnor}^{1/d} \cdot  \left(\frac{(kd/2+n-1)!}{(n-1)!}\right)^{1/d}  $.
\end{enumerate}
\end{algbox}  
\begin{remark}
A similar idea of re-scaling rows/columns with diagonal matrices to obtain a better certificate also appeared in the MAXCUT literature; see e.g. \cite[Theorem 2.2]{delorme1993performance}. 
\end{remark}
\begin{theorem}\label{thm:main}
	Let $d\leq n$ be positive integers and $k$ be an even integer. 
	For any instance  $\Phi$ of $k$-XOR, the output $\hsat$ of Algorithm~\ref{alg:1} satisfies  $|\sat{x}-\frac{1}{2}|\leq \hsat-\frac{1}{2}$ for any $x\in \{ \pm 1 \}^n$.
	Assume further that $\Phi$ is an instance of random $k$-XOR with probability $p$ (Definition~\ref{def:xor}).
	If $p\cdot d^{k/2-1}n^{k/2}>1$,  the following bound holds with probability at least $1-O(n^{-1})$  for some absolute constant $c>0$:
\begin{align*}
  \hsat -\frac{1}{2}  \leq    c\cdot \frac{  \log n }{\sqrt{d^{k/2 -1} n^{k/2}p}} \cdot \frac{e^{3k/4}\cdot (k/2+1)^{k/2}}{(k/2)^{k/4}}\,.
\end{align*}
In particular, Algorithm~1 with parameter $d$ certifies  with high probability that $\sat{x}$ is equal to $1/2 \pm  o(1)$ for any $x\in \{ \pm 1 \}^n$ whenever $n^{k-1}p=\omega( (n/d)^{k/2-1}\log^2n)$.
\end{theorem}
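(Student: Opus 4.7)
The plan is to assemble the two propositions and the spectral norm corollary proved earlier in this section, namely Proposition~\ref{prop1}, Proposition~\ref{prop2}, and Corollary~\ref{thm:rescale}, together with a routine concentration bound on $m$; essentially no new ingredients are required.

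First I would handle the deterministic claim. The output of Algorithm~\ref{alg:1} is exactly $\frac{1}{2} + \frac{1}{2m}\|\matnor\|^{1/d}\cdot[(kd/2+n-1)!/(n-1)!]^{1/d}$, so combining the certificate inequality of Proposition~\ref{prop1} with the closed-form evaluation $\sum_{I\in[n]^{kd/2}}\vv{I} = (kd/2+n-1)!/(n-1)!$ from Proposition~\ref{prop2} immediately gives $|\sat{x}-\frac{1}{2}|\leq \hsat-\frac{1}{2}$ for every $x\in\{\pm 1\}^n$.

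For the probabilistic bound, I would first lower bound $m$. Under the hypotheses $p\cdot d^{k/2-1}n^{k/2}>1$ and $d\leq n$ we have $pn^k \geq n^{k/2}/d^{k/2-1} \geq n$, so a standard Chernoff argument applied to the independent indicators of which $k$-tuples are sampled yields $m \geq \frac{1}{2}pn^k$ except on an event of probability $e^{-\Omega(n)}$. On the intersection of this event with the high-probability event of Corollary~\ref{thm:rescale} (a union bound costing an extra $n^{-2}$), I would plug the $d\leq n$ branch of Proposition~\ref{prop2}, namely $[(kd/2+n-1)!/(n-1)!]^{1/d}\leq (k/2+1)^{k/2}n^{k/2}$, together with the spectral norm bound of Corollary~\ref{thm:rescale} into the Proposition~\ref{prop1} certificate to obtain
\begin{align*}
\hsat-\frac{1}{2} &\leq \frac{1}{pn^k}\cdot c\cdot \frac{e^{3k/4}}{(k/2)^{k/4}}\cdot \frac{n^{k/4}p^{1/2}}{d^{(k-2)/4}}\log n\cdot (k/2+1)^{k/2} n^{k/2}.
\end{align*}
Collecting powers of $n$ and $p$ collapses the right-hand side to $c\cdot\frac{e^{3k/4}(k/2+1)^{k/2}}{(k/2)^{k/4}}\cdot\frac{\log n}{\sqrt{d^{k/2-1}n^{k/2}p}}$, which matches the displayed bound.

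For the final consequence, I would rewrite the density assumption $n^{k-1}p=\omega((n/d)^{k/2-1}\log^2 n)$ as $d^{k/2-1}n^{k/2}p = \omega(\log^2 n)$, which forces the derived bound on $\hsat-\frac{1}{2}$ to vanish and hence yields $\sat{x}=\frac{1}{2}\pm o(1)$ uniformly in $x\in\{\pm 1\}^n$. The main obstacle here is purely bookkeeping: one has to check that the $n^{k/2}$ factor from Proposition~\ref{prop2}, the $(pn^k)^{-1}$ factor from $1/m$, and the $n^{k/4}p^{1/2}d^{-(k-2)/4}$ factor from Corollary~\ref{thm:rescale} collapse exactly into $1/\sqrt{d^{k/2-1}n^{k/2}p}$. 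All the genuinely non-trivial work is already encapsulated in Corollary~\ref{thm:rescale}.
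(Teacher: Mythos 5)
Your proposal is correct and follows essentially the same route as the paper's own proof: combine Proposition~\ref{prop1} with the exact evaluation in Proposition~\ref{prop2} for the deterministic claim, then invoke Corollary~\ref{thm:rescale}, the $d\le n$ bound of Proposition~\ref{prop2}, and a Chernoff lower bound $m\geq pn^k/2$ to conclude the probabilistic bound, with the same power-collecting arithmetic yielding $1/\sqrt{d^{k/2-1}n^{k/2}p}$. The only minor addition you offer is the observation that $pn^k\geq n$ justifies the Chernoff step, which the paper leaves implicit.
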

\begin{proof}
	 First from Proposition~\ref{prop1}, we have
	 \begin{align} \label{ineq:1}
	 \left|\sat{x} -\frac{1}{2}\right|\leq  \frac{1}{2m} \left[\|\matnor\|\cdot \left(\sum_{I\in [n]^{kd/2}}\vv{I}\right) \right]^{1/d} = \hsat -\frac{1}{2}\,,
	 \end{align}
	 where the equality is due to Proposition~\ref{prop2}.
	 Hence the first part of the theorem is proved.
	 As for the second part,  it follows from Corollary~\ref{thm:rescale} and Proposition~\ref{prop2} that with probability at least $1-O(n^{-2})$:
	 \begin{align} \label{ineq:2}
	    \left[\|\matnor\|\cdot \left(\sum_{I\in [n]^{kd/2}}\vv{I}\right) \right]^{1/d} \leq  c\cdot  \frac{n^{3k/4}p^{1/2}}{d^{ (k-2)/4}}  \log n \cdot \frac{e^{3k/4}
    \cdot (k/2+1)^{k/2}}{(k/2)^{k/4}}
	 \end{align}
	 for some absolute constant $c>0$.
	 Next, it follows from a standard concentration inequality (e.g. Chernoff bound) that with probability at least (say) $1-n^{-10}$, $m\geq pn^k/2$.
	 Putting these bounds back to \eqref{ineq:1}, we obtain 
	 \begin{align*} 
	 \left|\sat{x} -\frac{1}{2}\right|&\leq   \frac{1}{pn^k}\cdot c\cdot  \frac{n^{3k/4}p^{1/2}}{d^{ (k-2)/4}}  \log n \cdot \frac{e^{3k/4}
    \cdot (k/2+1)^{k/2}}{(k/2)^{k/4}} \\
    &= c\cdot  \frac{\log n}{\sqrt{d^{k/2-1} n^{k/2} p} } \cdot \frac{e^{3k/4}
    \cdot (k/2+1)^{k/2}}{(k/2)^{k/4}}\,,
	 \end{align*}
	and hence, the second part of the theorem also follows. 
\end{proof}

\section{Comparison with Raghavendra-Rao-Schramm}
 \label{comparison}
We compare  Algorithm~\ref{alg:1} with the refutation algorithm of Raghavendra, Rao and Schramm~\cite{raghavendra2017strongly}.
First, the algorithm steps in this paper is simpler than that of \cite{raghavendra2017strongly}. 
As we have discussed earlier, the trimming step in the algorithm of \cite{raghavendra2017strongly}  causes some technical complications as the resulting matrix  is no longer a matrix representation of $(\ff)^d$.
Indeed,  their algorithm first constructs  matrices of size $n^{kj/2}\times n^{kj/2}$ for $j\in [\delta d, d]$ and computes the spectral norms of those matrices to design a refutation certificate; see \cite[Section 4.1.1]{raghavendra2017strongly} for details.
This is in stark contrast with Algorithm~\ref{alg:1} which only computes the spectral norm of a \emph{single} matrix $\matnor$ of size $n^{kd/2}\times n^{kd/2}$.
In addition, while their certificate requires non-trivial analysis~\cite[Section 4.1.1]{raghavendra2017strongly} to guarantee its validity, the validity of our certificate $\hsat$ readily follows as we saw in Proposition~\ref{prop1}.

As a result of the simpler approach in this paper, the theoretical guarantee in this paper comes with less technical conditions and enjoys a better refutation guarantee as well as density requirement. 
More specifically, unlike the guarantee in \cite{raghavendra2017strongly}, our main theorem does not require a technical condition like $d\log n=\OO{n}$.
Moreover, the density requirement for strong refutation reads  $n^{k-1}p=\omega( (n/d)^{k/2-1}\log^{2k}n)$ in \cite{raghavendra2017strongly}, which is worse than  that of this paper by a poly-logarithmic factor (recall that the requirement in Theorem~\ref{thm:main} reads $n^{k-1}p=\omega( (n/d)^{k/2-1}\log^{2}n)$).
 Lastly, even when the density requirement is fulfilled, their refutation guarantee reads $\frac{1}{2}+\gamma+o(1)$ for some constant $\gamma>0$ that depends on a hyperparameter in the trimming step. 
 On the other hand, this constant $\gamma$ does not appear in the refutation guarantee of this paper.
 
 \section{Conclusion}
 
 In this paper, we establish a simpler approach to strong refutation of random $k$-XOR below the spectral threshold.
 Our simplification is two-fold.
 First, we  provide a simpler spectral norm analysis of the certificate matrix of the previous work~\cite{raghavendra2017strongly} (Section~\ref{spec}). Second,   we develop a simple strong refutation algorithm  for the even $k$ case  (Section~\ref{algdes}).
 Thanks to our simpler approach, our main result (Theorem~\ref{thm:main}) enjoys a better theoretical guarantee under less   assumptions.
It is important to note that a recent work by Wein, El Alaoui and Moore also establishes a simpler strong refutation algorithm for random even $k$-XOR~\cite[Theorem F.1]{wein2019kikuchi} with a different approach.
Given the successful simplifications for the even $k$ case, it would be  interesting  to see if one can come up with a simpler strong refutation algorithm  for the odd $k$ case.

\section*{Acknowledgements}

The author thanks Vijay Bhattiprolu for suggesting the idea of re-scaling entries of the matrix representation and anonymous reviewers for valuable comments. 
The author acknowledges  the Kwanjeong Educational Foundation scholarship and also partial support as a graduate research assistant from the NSF Grant (CAREER: 1846088). 
 
\bibliographystyle{plain}
\bibliography{ref}
\appendix

\section{Deferred proofs of claims}
\subsection{Proof of Claim~\ref{cl1}} \label{pf:lem1}
Recall that Claim~\ref{cl1} reads		$\num\left(\Qc~|~\{\id \}\right)=\num\left(\Qc~|~\{\pii{j} \}\right) $ for any $\{\pii{j}\}$.  
Let us arbitrarily fix a collection of permutations  $\{\pii{j}\}$.
The main observation is that for any $\{\II{j}\}$ and $\{\sii{j}\}$, we have
$\Pc(\{\II{j}\},\{\pii{j}\} , \{\sii{j}\}) = \Pc(\{\pii{j}(\II{j})\},\{\id\} , \{\sii{j}\circ \pii{j}^{-1}\})$.
This is a straightforward consequence of Definition~\ref{def:part}.
Hence, there is an one-to-one correspondence between the collection of pairs $(\{\II{j}\}, \{\sii{j}\})$ such that $\Pc(\{\II{j}\},\{\pii{j}\} , \{\sii{j}\}  )=\Qc$ and the collection such that $\Pc(\{\II{j}\},\{\id\} , \{\sii{j}\}  )=\Qc$. 
This concluded the proof.

\subsection{Proof of Claim~\ref{cl2}} \label{pf:lem2}
We first restate Claim~\ref{cl2}: for any $\Qc$-valid $\{\II{j}\}$, there are at most  $(kd/2)^{k|\Qc|/2} \cdot \prod_{j=1}^{2\ell}\vv{\II{j}}$ different 
$\{\sii{j}\}$'s such that  $\Pc\left(\{\II{j}\}, \{\id\}, \{\sii{j}\}\right)= \Qc$.

We bound the number of feasible $\{\sii{j}\}$'s  as we go through the index set $\Ic=\{(j,s)~:~j=1,\dots,2\ell,s=1,\dots, d\}$  in the lexicographical order, i.e., $(1,1)$, $(1,2)$, \dots, $(1,d)$,  $(2,1)$, \dots and so on.
As we read the indices in such an order,  we call an index $(j,s)$ \emph{new} if  $(\U{j}{s}, \sii{j}(\U{j+1}{s}))$ is not equivalent to the previously appeared indices.
Consider the indices $(j,1),(j,2),\dots, (j,d)$ for a fixed  $j\in [2\ell]$. 
We consider two different scenarios:
\begin{enumerate}
    \item First, suppose that all indices $(j,1),(j,2),\dots, (j,d)$ are old.
Then it should be the case that for each $m=1,\dots, kd/2$, $\sii{j}(m)$ is chosen so that the $\sii{j}(m)$-th coordinate of $\II{j+1}$ respects the previous appeared equivalent index.
Having observed this,  it readily follows that there are $\mult{\II{j+1}}!$ different choices for $\sii{j}(1),\dots \sii{j}(kd/2)$ considering the permutation. 
\item Now, suppose that there are $\numnew^{(j)}$ new indices among  $(j,1),(j,2),\dots, (j,d)$.
For simplicity, assume that $\tup_{j,1},\dots, \tup_{j,\numnew^{(j)}}$ are new.
Choosing the values $\sii{j}(1),\sii{j}(2),\dots, \sii{j}(k\numnew^{(j)}/2)$ arbitrarily, there are at most 
\begin{align*}
    (kd/2)(kd/2-1)\cdots (kd/2-k\numnew^{(j)}/2+1) \leq (kd/2)^{k\numnew^{(j)}/2}
\end{align*}
different choices for $\sii{j}(1),\sii{j}(2),\dots, \sii{j}(k\numnew/2)$.
A similar counting to previous case yields that for the remaining values there are at most $\mult{\II{j+1}}!$ different choices.
\end{enumerate}
Taking a product over all $j$'s, we complete the proof since $\sum_{m=1}^{2\ell}\numnew^{(m)} =|\Qc|$. .\qed

\subsection{Proof of Claim~\ref{cl3}} \label{pf:lem3}
Let $\Qc$ be an even partition.
We count the number of possible $\Qc$-valid $\{\II{j}\}$'s.
First, let us choose $\II{1}$ arbitrarily.
Note that there are $n^{kd/2}$ different ways of choosing $I_1$.
Now, consider $I_2,\dots, I_{2\ell}$.
Similar to the proof of Claim~\ref{cl2}, we will bound the number of feasible choices s we go through the index set $\Ic=\{(j,s)~:~j=1,\dots,2\ell,s=1,\dots, d\}$  in the lexicographical order.
Again, we call an index $(j,s)$ \emph{new} if  $(\pii{j}(\U{j}{s}), \sii{j}(\U{j+1}{s}))$ is not equivalent to the previously appeared indices.

Note that we only need to consider new indices because the tuples of old indices are fully determined by their previous appearance.
We begin with the tuples $ (1,1), (1,2),\dots, (1,d)$.
Whenever we encounter a new tuple, say $(\U{1}{s}, \sii{1}(\U{2}{s}))$ , we only need to specify $\sii{1}(\U{2}{s})$ since  $\II{1}$ is already fully specified.
Hence, there are at most $n^{k\numnew^{(1)}/2}$ different ways of choosing $\II{2}$, where $\numnew^{(1)}$ is the number of new indices among $ (1,1), (1,2),\dots, (1,d)$.
By similar arguments, inductively for $j=2,3,\dots, 2\ell$, there are at most $n^{k\numnew^{(j)}/2}$ different ways of choosing $\II{j}$.
Taken collectively, we obtain the result since $\sum_{j=1}^{2\ell}\numnew^{(j)} =|\Qc|$. 

\end{document}